\newtheorem{remark}{Remark}[section]
\newtheorem{assumption}{Assumption}[section]
\newtheorem{theorem}{Theorem}[section]
\newtheorem{corollary}{Corollary}[section]
\def\BibTeX{{\rm B\kern-.05em{\sc i\kern-.025em b}\kern-.08em
		T\kern-.1667em\lower.7ex\hbox{E}\kern-.125emX}}
\begin{document}
	
	\title{\LARGE Set-based and Dynamical Feedback-augmented Hands-off Control\\\color{white} (Andrei's Version)\vspace{-3mm}}
	\author{Andrei Speril\u{a}, Sorin Olaru, and St\'{e}phane Drobot\vspace{-8mm}
		\thanks{Andrei Speril\u{a} and Sorin Olaru are with the Laboratoire des Signaux et Syst\` emes, CentraleSup\'elec, UPSaclay, Gif-sur-Yvette, France,
			{\tt\footnotesize \{andrei.sperila,sorin.olaru\}@centralesupelec.fr}}%
		\thanks{St\'{e}phane Drobot is with the R\'{e}seau de Transport d'\'{E}lectricit\'{e}, Paris, France,
			{\tt\footnotesize stephane.drobot@rte-france.com}}
	}
	
	\maketitle
	\thispagestyle{empty}
	
	\begin{abstract}
		A novel set-theoretical approach to hands-off control is proposed, focusing on spatial arguments for command limitation rather than temporal ones. By employing dynamical feedback alongside invariant set-based constraints, actuation is employed only to drive the system's state within a ``hands-off region'' of its state-space, where the plant can freely evolve in open-loop configuration. A computationally-efficient procedure with strong theoretical guarantees is devised, and its effectiveness is showcased via an intuitive practical example.\vspace{-1mm}
	\end{abstract}
	
	\begin{IEEEkeywords}
		command switching, dynamical feedback, hands-off control, invariant sets
	\end{IEEEkeywords}\vspace{-3mm}
	
	\section{Introduction}\label{sec:intro}
	
	\IEEEPARstart{C}{lassical} formulations of the hands-off problem, such as the seminal work presented in Chapter 6 of \cite{ho_book}, focused on minimizing the amount of time in which control action is applied to a system, while ensuring satisfactory closed-loop performance guarantees in the presence of disturbance or model uncertainty. This approach was extensively investigated in \cite{ho_art}, where all the aforementioned aspects were addressed for a broad class of systems. While the problem of \emph{time-optimal} hands-off control is still being studied in system-theoretical literature \cite{ho_aux1,ho_aux2,ho_aux3}, the spatial aspect of this design problem has not received the same level of attention.
	
	In this paper, we tackle the problem of computing and then enforcing, by means of dynamical feedback, the existence of a \emph{hands-off region} in the state space of a finite-dimensional, linear, and time-invariant (FDLTI) system. The key feature of such a region is the fact that, when the system's state is located within it, the proposed control algorithm does not act upon the plant. It is only when the system's state leaves the hands-off region that the controller is allowed to act, in order to drive the system's state back into the aforementioned region. Once this is achieved, the feedback loop is interrupted, and the system's state is allowed to drift within the hands-off region.
	
	Despite the shared switching mechanism, the technique presented in this paper is notably distinct from the \emph{safety filter}-based approach proposed in \cite{safety_filt}, both in terms of the commutation's nature (our approach switches between open- and closed-loop configurations, rather than two distinct control laws) and in terms of the associated computational cost (our solution involves merely the implementation of an FDLTI controller). Likewise, due to the set-theoretical focus of our approach, the use of strong set invariance (see, for example, Chapter 4 of \cite{STMC}) and that of inexpensive state-space-based control laws differentiates our proposed solution from the \emph{barrier function}-based technique discussed in \cite{barrier_func}.

	Once notation is introduced in Section~\ref{subsec:denote}, the control problem described above will be rigorously formulated in Section~\ref{subsec:prob_st}. Following this, the control solution, which ensures the desired functioning, will be presented in Section~\ref{sec:main_res}, and its application will be showcased in Section~\ref{sec:num_ex}. Alongside Section~\ref{sec:outro}, which contains a set of concluding remarks, the manuscript also includes an appendix, which holds the proofs of the main theoretical results presented in this paper.\vspace{-1mm}
	
	\section{Preliminaries}\label{sec:prelims}\vspace{-1mm}
	
	\subsection{Notation}\label{subsec:denote}\vspace{-1mm}

	Let $\mathbb{N}$ and $\mathbb{R}$ denote the set of natural and real numbers, respectively. Additionally, let $\mathbb{N}_{>0}:=\{n\in\mathbb{N}:n>0\}$ and, for $a,b\in\mathbb{R}$ with $a\leq b$, let $\mathbb{N}_{[a,b]}:=\{n\in\mathbb{N}:a\leq n\leq b\}$. 
	
	For any set $\mathcal{M}$, let $\mathcal{M}^{n}$ and $\mathcal{M}^{p\times m}$ stand for the set of all vectors of dimension $n$ and, respectively, the set of all matrices of dimension $p\times m$ whose entries belong to $\mathcal{M}$. The operator $\|\cdot\|$ represents the (induced, for matrix arguments) 2-norm and $e_i$ stands for the $i^\text{th}$ column of the identity matrix, whose dimension is inferred from the available context. For any $n\in\mathbb{N}_{>0}$, we proceed to denote by $0_n$ and by $1_n$ those\newline $n$-dimensional vectors whose entries are all $0$ and, respectively, all $1$. For any two sets $\mathcal{X},\mathcal{Y}\subseteq\mathbb{R}^n$, given an arbitrary $n\in\mathbb{N}_{>0}$, the operation $\mathcal{X}\oplus\mathcal{Y}$ denotes the Minkowski sum, while $\mathcal{X}\ominus\mathcal{Y}$ stands for the Pontryagin difference. Finally, we employ the shorthand notation $(-\mathcal{X}):=\{-x:x\in\mathcal{X}\}$, and for any two polyhedra (see, for example, Section 3.3 in~\cite{STMC}) $\mathcal{Y}_1,\mathcal{Y}_2\subseteq\mathbb{R}^n$, the operator $\mathrm{conv}(\mathcal{Y}_1,\mathcal{Y}_2)$ computes the \emph{convex hull} of the vertices which make up $\mathcal{Y}_1$ and $\mathcal{Y}_2$.\vspace{-1mm}

	\subsection{Problem Statement}\label{subsec:prob_st}\vspace{-1mm}
	
	We consider a state-space system with an $n$-dimensional state vector, denoted by $x[k]$, whose dynamics are as follows\vspace{-1mm}
	\begin{equation}\label{eq:ss_sys}
		\left\{\begin{aligned}
			x[k+1]=&\ Ax[k]+\sigma[k]Bu[k]+(1-\sigma[k])d[k]+w[k],\\
			y[k]=&\ x[k]+v[k],
		\end{aligned}\right.\vspace{-1mm}
	\end{equation}
	The $m$-dimensional vector $u[k]$ represents the system's controlled inputs, whereas $d[k]\in\mathcal{D}$ stands for the uncontrolled ones. The vector $w[k]\in\mathcal{W}$ represents the process noise, while $\sigma[k]\in\mathbb{B}:=\{0,1\}$ is a \emph{switching signal}
	which governs the system's binary functioning, and $v[k]\in\mathcal{V}$ designates the measurement noise. We also make the following assumption.\vspace{-1mm}
	
	\begin{assumption}\label{asu:sets}
		All the sets discussed in this section (with the exception of $\mathbb{B}$) are polytopic (see Section 3.3 in \cite{STMC}) and they include the zero vector inside their respective interiors.\vspace{-1mm}
	\end{assumption}
	
	Given $\mathcal{U}\subset\mathbb{R}^m$ along with $\mathcal{S}\subset\mathcal{X}\subset\mathbb{R}^n$ which satisfy\vspace{-1mm}
	\begin{subequations}
		\begin{align}\label{eq:incl_a}
			&\exists\, \varepsilon_p,\varepsilon_m\in(0,1)\text{ s.t. }
			\mathcal{V}\subseteq\varepsilon_p\,\mathcal{S}\text{ and }(-\mathcal{V})\subseteq\varepsilon_m\,\mathcal{S},\\
			\label{eq:incl_b}
			&\exists\, \delta>0\text{ s.t. }
			\overline{\mathcal{B}}_\delta:=\{r\in\mathbb{R}^n:\|r\|\leq\delta\}\subseteq\mathcal{S},\\
			&\mathcal{S}^+:=A\,\mathcal{S}\oplus\mathcal{W}\oplus\mathcal{D}\subseteq\mathcal{X},\label{eq:incl_c}
		\end{align}
	\end{subequations}
	we assume that the state is initialized somewhere in the set $\mathcal{S}_c:=\mathrm{conv}(\mathcal{S},\mathcal{S}^+)\subseteq\mathcal{X}$, with the inclusion following from Assumption~\ref{asu:sets}, from $\mathcal{S}\subset\mathcal{X}$ and from \eqref{eq:incl_c}. The goal of the paper is to design a control scheme that computes appropriate time-domain functions $u[k]$ and $\sigma[k]$ such that the system described by \eqref{eq:ss_sys} functions as follows:
	\begin{enumerate}
		\item[$i)$] the constraints $x[k]\in\mathcal{X}$ along with $u[k]\in\mathcal{U}$ must be satisfied at all time instants;\smallskip
		
		\item[$ii)$] If the system has been initialized outside of $\mathcal{S}$ or if the system's state strays outside of $\mathcal{S}$, then $\sigma[k]$ must be set to $1$ and the state vector must be driven to within some $\Omega_I\subseteq\mathcal{S}$ (to be designed in the sequel);\smallskip
		
		\item[$iii)$] After the system's state has been brought inside $\Omega_I$ or if the system has been initialized in $\mathcal{S}$, then $\sigma[k]$ must be set to $0$ and the system must be allowed to drift within $\mathcal{S}$, as influenced by $w[k]$ and by $d[k]$.
	\end{enumerate}
	
	\begin{remark}\label{rem:measurement}
		Recall the fact that $x[k]$ is not directly available for measurement, due to the noise vector $v[k]$. Checking, therefore, whether $x[k]\in\mathcal{S}$ must be done via the sufficient condition $y[k]\in(\mathcal{S}\ominus(-\mathcal{V}))$ (which is non-empty, due to \eqref{eq:incl_a}-\eqref{eq:incl_b} and to Assumption~\ref{asu:sets}). Consequently, we will treat $y[k]\not\in(\mathcal{S}\ominus(-\mathcal{V}))$ as corresponding to $x[k]\not\in\mathcal{S}$, since we are unable to guarantee the fact that $x[k]\in\mathcal{S}$. Moreover, the same arguments hold when assessing whether $x[k]\in\Omega_I$, which will be designed such that $(\Omega_I\ominus(-\mathcal{V}))\neq\{\emptyset\}$. The use of certain observer-based strategies could mitigate these shortcomings, yet their use is beyond the scope of this paper, and we assume that $\mathcal{V}$ is sufficiently small (with respect to $\mathcal{S}$ and $\Omega_I$) so as not to impact performance significantly.
	\end{remark}
	
	Having presented the desired functioning of the system described by \eqref{eq:ss_sys}, we now propose a control solution which satisfies the operating principles laid out in points $i)$-$iii)$.

	\section{Main Results}\label{sec:main_res}
	
	\subsection{Theoretical Aspects}
	
	One of the main contributions of the work presented in this paper is the fact that, when $\sigma[k]=1$, we compute $u[k]$ via dynamical feedback. Thus, consider the systems
	\begin{equation}\label{eq:ss_ctl}
		\left\{\begin{aligned}
			x_K[k+1] = A_K x_K[k] + B_K y[k],\\
			u[k] = C_K x_K[k] + D_K y[k],
		\end{aligned}
		\right.
	\end{equation}
	in which $x_K[k]$ represents the controller's $n_K$-dimensional state vector. We denote by $\xi[k]:=\begin{bmatrix}
		x^\top[k] & x_K^\top[k]
	\end{bmatrix}^\top$ the closed-loop state vector of the system described by \eqref{eq:cl_dyn}, along with the exogenous signal vector $z[k]:=\begin{bmatrix}
		w^\top[k] & v^\top[k]
	\end{bmatrix}^\top$, which satisfies $z[k]\in\mathcal{Z}:=\mathcal{W}\times\mathcal{V}$. By employing this notation, the closed-loop dynamics can be expressed as
	\begin{equation}\label{eq:cl_dyn}
		\left\{\begin{aligned}
			\xi[k+1] =&\ A_{CL} \xi[k] + B_{CL} z[k],\\
			u[k] =&\ C_{CL} \xi[k] + D_{CL} z[k],
		\end{aligned}\right.
	\end{equation}
	where
	\begin{equation}\label{eq:cl_real}
		\left\{
		\begin{array}{ll}
			\hspace{-2mm}A_{CL}\hspace{-0.25mm}=\hspace{-0.25mm}\begin{bmatrix}
				A+BD_K & \hspace{-2mm} BC_K \\ B_K & \hspace{-2mm} A_K
			\end{bmatrix},&\hspace{-2mm}
			B_{CL}\hspace{-0.25mm}=\hspace{-0.25mm}\begin{bmatrix}
				I_n & \hspace{-2mm} BD_K \\ O & \hspace{-2mm} B_K
			\end{bmatrix},\\\vspace{-2mm}\\
			\hspace{-2mm}C_{CL}\hspace{-0.25mm}=\hspace{-0.25mm}\begin{bmatrix}
				D_K & C_K
			\end{bmatrix},&\hspace{-2mm}
			D_{CL}\hspace{-0.25mm}=\hspace{-0.25mm}\begin{bmatrix}
				O & D_K
			\end{bmatrix}.
		\end{array}
		\right.
	\end{equation}
	
	To ensure that the system given by \eqref{eq:ss_sys} obeys the functioning principles laid out in points $i)$-$iii)$ from Section~\ref{subsec:prob_st}, we adopt a set-theoretical formulation inspired by the work from \cite{io_sets} for our control problem. More precisely, we will construct a pair of sets $\Omega_I\subset\mathbb{R}^n$ and $\Omega_O\subset\mathbb{R}^{n+n_K}$, which are (in effect) subsets of the state-spaces belonging to the dynamical systems described in \eqref{eq:ss_sys} and \eqref{eq:cl_dyn}, respectively. We introduce the set $\mathcal{N}:=(-\mathcal{V})\oplus\mathcal{V}$ and we assume that $0_n\in\Omega_I$, that $0_{n+n_K}\in\Omega_O$, that both $\Omega_I$ and $\Omega_O$ are polyhedral in nature, and these two sets satisfy:
	
	\begin{enumerate}
		
		\item[I1)] $\Omega_I\subseteq\varepsilon_s\,\mathcal{S}\text{, for a given }\varepsilon_s\in(0,1)\,;$\smallskip
		
		\item[I2)] There exists $\alpha>0$ such that\vspace{-1mm}
		\begin{equation*}
			\overline{\mathcal{B}}_\alpha:=\{r\in\mathbb{R}^n:\|r\|\leq\alpha\}\subseteq(\Omega_I\ominus\mathcal{N})\,;\vspace{-1mm}
		\end{equation*}
		
		\item[I3)] There exists $\beta\in(0,1)$ such that\vspace{-1mm}
		\begin{equation*}
			\textstyle\bigoplus_{i=0}^N\begin{bmatrix}
				I_n \ \, O
			\end{bmatrix}A_{CL}^iB_{CL}^{ }\mathcal{Z}\subseteq\beta\,(\Omega_I\ominus\mathcal{N}),\forall N\in\mathbb{N}\,;
		\end{equation*}
		
		\item[O1)] $\mathcal{S}_c\times\{0_{n_K}\}\subseteq\Omega_O$\,;\smallskip
		
		\item[O2)] $A_{CL}\,{\Omega}_O\oplus B_{CL}\,\mathcal{Z}\subseteq{\Omega}_O$\,;\smallskip
		
		\item[O3)] $\begin{bmatrix}
			I_n & O
		\end{bmatrix}\Omega_O\subseteq\mathcal{X}$\,;\smallskip
		
		\item[O4)] $C_{CL}\,\Omega_O\oplus D_{CL}\,\mathcal{Z}\subseteq \mathcal{U}$\,.\vspace{-1mm}
		
	\end{enumerate}

	\begin{remark}\label{rem:eps}
		Note that $\varepsilon_s\neq 1$ in I1) and, thus, the frontiers of $\Omega_I$ and $\mathcal{S}$ may never intersect (presuming that $\mathcal{S}\neq\{0\}$ and recalling Assumption~\ref{asu:sets}). If this were not the case, then certain edge cases could be constructed where an arbitrarily small $d[k]$ (in terms of its norm value) would be able to propel the state outside of $\mathcal{S}$. In order to prevent such rapid oscillations between the system's two modes of functioning, the scalar $\varepsilon_s$ will be treated as a design parameter.\vspace{-1mm}
	\end{remark}
	
	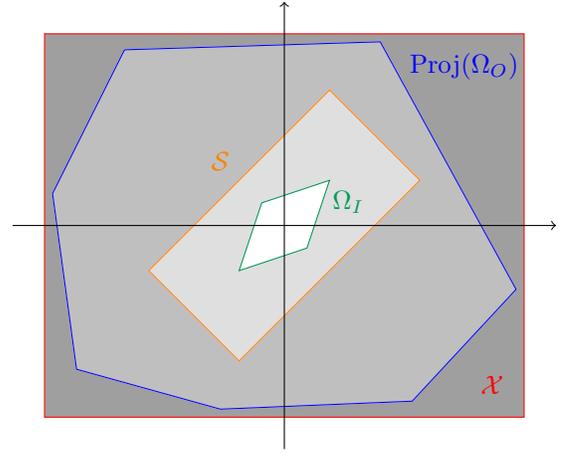
\begin{figure}[t]
		\centering
		\begin{tikzpicture}[scale=0.425]
			\draw  (-7.5,-6) [color=red, fill=gray, fill opacity=0.75] rectangle (7.5,6);
			\draw [draw=white, fill=white, fill opacity=1] (-5,5.5) -- (-7.25,1) -- (-6.5,-4.5) -- (-2,-5.75) -- (4, -5.5) -- (7.25,-2) -- (3,5.75) -- (-5,5.5);
			\draw [draw=blue, fill=gray, fill opacity=0.5] (-5,5.5) -- (-7.25,1) -- (-6.5,-4.5) -- (-2,-5.75) -- (4, -5.5) -- (7.25,-2) -- (3,5.75) -- (-5,5.5);
			\draw [color=white, rotate around={315:(0,0)}, fill=white, fill opacity=1] (-2,4) -- (2,4) -- (2,-4) -- (-2,-4) -- cycle;
			\draw [color=orange, rotate around={315:(0,0)}, fill=gray, fill opacity=0.25] (-2,4) -- (2,4) -- (2,-4) -- (-2,-4) -- cycle;
			\draw [color=ForestGreen, rotate around={45:(0, 0)}, fill=white, fill opacity=1] (-2,0) -- (0,1) -- (2,0) -- (0,-1) -- cycle;
			\draw [->] (0,-7) -- (0,7);
			\draw [->] (-8.5,0) -- (8.5,0);
			\node at (5.625,5) {\color{blue}$\mathrm{Proj}(\Omega_O)$};
			\node at (6.5,-5) {\color{red}$\mathcal{X}$};
			\node at (2,0.75) {\color{ForestGreen}$\Omega_I$};
			\node at (-2,2) {\color{orange}$\mathcal{S}$};
		\end{tikzpicture}
		\caption{The greyscale shading inside the sets indicates the connection between states and their corresponding control action: dark grey indicates admissible states that will never be reached (due to the applied commands), medium grey marks full ``hands-on control'' space, light grey designates the transitional space, and white is assigned to full ``hands-off control'' space.\vspace{-5mm}}
		\label{fig:set_incl}
	\end{figure}
	
	To get a better sense of the most important sets which are employed in our control system, we illustrate via Figure~\ref{fig:set_incl} the concept for a system of type \eqref{eq:ss_sys}, in which we have $n=2$. These schematic representations provide a deeper insight into the geometric properties of the aforementioned sets and, in addition to this, we also highlight, via the following result, the way in which these properties come into play, with the aim of ensuring the desired functioning for the system in \eqref{eq:ss_sys}.

	\begin{algorithm}[t]
		\textbf{Initialization:} Read $y[k_0]$ from the sensors\; \eIf{$y[k_0]\in(\mathcal{S}\ominus(-\mathcal{V}))$}{Set $\sigma[k_0]\gets 0$ and $u[k_0]\gets 0_m$, do $k\gets k_0+1$, then go to \textbf{Monitoring}\;}{Set $k\gets k_0$, $\sigma[k]\gets 1$ and $x_K[k]\gets 0_{n_K}$, then go to \textbf{Control}\;}
		\textbf{Monitoring:} Read $y[k]$ from the sensors\; \eIf{$y[k]\in(\mathcal{S}\ominus(-\mathcal{V}))$}{\eIf{$\sigma[k-1]=1$\emph{ \bf and }$y[k]\not\in(\Omega_I\ominus(-\mathcal{V}))$}{Set $\sigma[k]\gets 1$, then go to \textbf{Control}\;}{Set $\sigma[k]\gets 0$ and $u[k]\gets 0_m$, do $k\gets k+1$, then go to \textbf{Monitoring}\;}}{\If{$\sigma[k-1]=0$}{Set $x_K[k]\gets 0_{n_K}$\;}Set $\sigma[k]\gets 1$, then go to \textbf{Control}\;}
		\textbf{Control:} Compute $x_K[k+1]$ and $u[k]$ as in \eqref{eq:cl_dyn}, do $k\gets k+1$, then go to \textbf{Monitoring}\;
		\caption{Control procedure for type \eqref{eq:ss_sys} systems.}
		\label{alg:ho_periodic}
	\end{algorithm}
	
	\begin{theorem}\label{thm:algo}
		Consider a system of type \eqref{eq:ss_sys} and a controller of type \eqref{eq:ss_ctl}, for which $A_{CL}$ is a Schur matrix and for which $\Omega_I$ along with $\Omega_O$ satisfy I1)-I3) and O1)-O4), respectively. If the system's initialization satisfies $x[k_0]\in\mathcal{S}_c\,$, then applying Algorithm~\ref{alg:ho_periodic} ensures that:\smallskip
		\begin{enumerate}
			\item[a)] $x[k]\in\mathcal{X}$ and $u[k]\in\mathcal{U}$, for all $k\geq k_0;$\smallskip
			
			\item[b)] There exists $T_{\mathrm{max}}\in\mathbb{N}_{>0}$ which satisfies
			\begin{equation*}
				\left\{
				\begin{aligned}
					&\sigma[k_0]=1\Rightarrow\exists\, T_0\in\mathbb{N}_{[1,T_{\mathrm{max}}]}\emph{ s.t. }\sigma[k_0+T_0]=0,\\
					&k>k_0,\,\sigma[k]=1,\,\sigma[k-1]=0\Rightarrow\\
					&\qquad\qquad\quad\Rightarrow\exists\, T\in\mathbb{N}_{[1,T_{\mathrm{max}}]}\emph{ s.t. }\sigma[k+T]=0;
				\end{aligned}
				\right.
			\end{equation*}
			
			\item[c)] $\sigma[k]=0\Rightarrow x[k]\in\mathcal{S}$, for all $k\geq k_0.$
		\end{enumerate}
	\end{theorem}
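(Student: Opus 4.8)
The plan is to run an induction on the time index $k\ge k_0$ to establish one structural invariant, deduce parts a) and c) from it, and then obtain part b) separately by combining the Schur property of $A_{CL}$ with I2)--I3). The invariant I would maintain is: (i) $\sigma[k]=0\Rightarrow x[k]\in\mathcal{S}$; (ii) $\sigma[k]=1\Rightarrow$ step $k$ executes the \textbf{Control} block and $\xi[k]\in\Omega_O$; and (iii) at every instant at which $\sigma$ is \emph{freshly} set to $1$ (i.e.\ $k_0$ when $\sigma[k_0]=1$, or any $k>k_0$ with $\sigma[k]=1$ and $\sigma[k-1]=0$) one has $x_K[k]=0_{n_K}$ and $x[k]\in\mathcal{S}_c$.

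For the base case $k=k_0$, if $\sigma[k_0]=0$ then \textbf{Initialization} has passed the test $y[k_0]\in(\mathcal{S}\ominus(-\mathcal{V}))$, so $x[k_0]=y[k_0]-v[k_0]\in\mathcal{S}$ (the argument of Remark~\ref{rem:measurement}, with $v[k_0]\in\mathcal{V}$); if $\sigma[k_0]=1$ then $x_K[k_0]=0_{n_K}$ and $x[k_0]\in\mathcal{S}_c$ by hypothesis, so $\xi[k_0]\in\mathcal{S}_c\times\{0_{n_K}\}\subseteq\Omega_O$ by O1). For the inductive step I would dispatch on the branches of Algorithm~\ref{alg:ho_periodic}: $\sigma[k+1]=0$ is set only inside the branch guarded by $y[k+1]\in(\mathcal{S}\ominus(-\mathcal{V}))$, whence $x[k+1]\in\mathcal{S}$; if $\sigma[k+1]=1$ and $\sigma[k]=1$, the controller state is not reset and the plant runs in closed loop, so $\xi[k+1]=A_{CL}\xi[k]+B_{CL}z[k]\in A_{CL}\Omega_O\oplus B_{CL}\mathcal{Z}\subseteq\Omega_O$ by O2); and if $\sigma[k+1]=1$ and $\sigma[k]=0$, then $x_K[k+1]=0_{n_K}$ while, from $x[k]\in\mathcal{S}$ and the open-loop part of \eqref{eq:ss_sys}, $x[k+1]\in A\,\mathcal{S}\oplus\mathcal{W}\oplus\mathcal{D}=\mathcal{S}^+\subseteq\mathcal{S}_c$, so $\xi[k+1]\in\mathcal{S}_c\times\{0_{n_K}\}\subseteq\Omega_O$ by O1). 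With the invariant in hand, a) is immediate (for $\sigma[k]=1$ apply O3) and O4); for $\sigma[k]=0$ use $x[k]\in\mathcal{S}\subset\mathcal{X}$ and $u[k]=0_m\in\mathcal{U}$ by Assumption~\ref{asu:sets}), and c) is exactly clause (i).

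For b), fix an instant $k$ at which $\sigma$ is freshly set to $1$. By clause (iii), $\xi[k]=\begin{bmatrix}x^\top[k] & 0_{n_K}^\top\end{bmatrix}^\top$ with $x[k]\in\mathcal{S}_c$, so $\|\xi[k]\|\le M:=\max_{x\in\mathcal{S}_c}\|x\|<\infty$. Inspecting \textbf{Monitoring}, the loop remains closed, with no further reset of $x_K$, at every step after $k$ for which the test $y[k+j]\in(\Omega_I\ominus(-\mathcal{V}))$ still fails, and while that is the case $x[k+j]\in\{\begin{bmatrix}I_n & O\end{bmatrix}A_{CL}^j\xi[k]\}\oplus\bigoplus_{i=0}^{j-1}\begin{bmatrix}I_n & O\end{bmatrix}A_{CL}^iB_{CL}\mathcal{Z}$, with the second term contained in $\beta(\Omega_I\ominus\mathcal{N})$ by I3). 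Since $A_{CL}$ is Schur, $\|A_{CL}^j\|\to0$, so I would take the least $T_{\mathrm{max}}\in\mathbb{N}_{>0}$ with $\|A_{CL}^{T_{\mathrm{max}}}\|\,M\le(1-\beta)\alpha$; then I2) gives $\begin{bmatrix}I_n & O\end{bmatrix}A_{CL}^{T_{\mathrm{max}}}\xi[k]\in(1-\beta)\overline{\mathcal{B}}_\alpha\subseteq(1-\beta)(\Omega_I\ominus\mathcal{N})$, and convexity of $\Omega_I\ominus\mathcal{N}$ yields $(1-\beta)(\Omega_I\ominus\mathcal{N})\oplus\beta(\Omega_I\ominus\mathcal{N})=\Omega_I\ominus\mathcal{N}$. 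Consequently, either $\sigma$ returns to $0$ at some step in $\{k+1,\dots,k+T_{\mathrm{max}}-1\}$, or the loop stays closed throughout and $x[k+T_{\mathrm{max}}]\in\Omega_I\ominus\mathcal{N}$; in the latter case, using $\Omega_I\ominus\mathcal{N}=(\Omega_I\ominus(-\mathcal{V}))\ominus\mathcal{V}$ (as $\mathcal{N}=(-\mathcal{V})\oplus\mathcal{V}$) together with $\Omega_I\subseteq\varepsilon_s\mathcal{S}\subseteq\mathcal{S}$, we get $y[k+T_{\mathrm{max}}]=x[k+T_{\mathrm{max}}]+v[k+T_{\mathrm{max}}]\in(\Omega_I\ominus(-\mathcal{V}))\subseteq(\mathcal{S}\ominus(-\mathcal{V}))$, so \textbf{Monitoring} sets $\sigma[k+T_{\mathrm{max}}]=0$. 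Either way, the first $T\ge1$ with $\sigma[k+T]=0$ satisfies $T\le T_{\mathrm{max}}$, and the same $T_{\mathrm{max}}$ serves every fresh switch to $1$, which is b).

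The step I expect to be the real obstacle is b): one must (i) combine I2)--I3) through the Minkowski-sum / Pontryagin-difference identities into a clean ``bounded-time return of the closed-loop tube to $\Omega_I\ominus\mathcal{N}$'', (ii) verify carefully from the branch structure of Algorithm~\ref{alg:ho_periodic} that the loop genuinely stays closed, with no intermediate reset of $x_K$, until that first return, and (iii) convert the state-space membership $x\in\Omega_I\ominus\mathcal{N}$ into the measurement-based switching test $y\in(\Omega_I\ominus(-\mathcal{V}))$ actually used by the algorithm. By comparison, a) and c) are routine bookkeeping over the algorithm's branches, relying only on O1)--O4) and $\Omega_I\subseteq\mathcal{S}$.
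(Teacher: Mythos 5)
Your proposal is correct and follows essentially the same route as the paper's proof: c) from the measurement test $y\in(\mathcal{S}\ominus(-\mathcal{V}))$, a) from the forward invariance of $\Omega_O$ under O1)--O2) combined with O3)--O4), and b) from the free/forced response decomposition, bounding the forced part via I3) and the free part via the Schur property together with I2), then recombining by convexity of $\Omega_I\ominus\mathcal{N}$. Your explicit induction invariant is merely a more formal packaging of the same bookkeeping the paper performs branch by branch.
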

	\begin{proof}
		See the appendix.
	\end{proof}
	
	The following consequence of Theorem~\ref{thm:algo} plays a crucial role in obtaining 
	computationally inexpensive implementations for the controllers given in \eqref{eq:ss_ctl}.
	
	\begin{corollary}\label{cor:any_real}
		Let $\mathbf{K}(z)\hspace{-0.5mm}:=\hspace{-0.5mm}C_K(zI_{n_K}-A_K)^{-1}B_K\hspace{-0.5mm}+\hspace{-0.5mm}D_K$ represent the transfer function matrix of the controller mentioned in Theorem~\ref{thm:algo} (which makes $A_{CL}$ a Schur matrix). Then, the latter result still holds when employing \emph{any other realization} of $\mathbf{K}(z)$ in the \textbf{\emph{Control}} step of Algorithm~\ref{alg:ho_periodic}.
	\end{corollary}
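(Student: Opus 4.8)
The plan is to show that, whichever realization of $\mathbf{K}(z)$ is plugged into the \textbf{Control} step, Algorithm~\ref{alg:ho_periodic} generates exactly the same triple of sequences $(x[k],u[k],\sigma[k])$ as it does for the realization $(A_K,B_K,C_K,D_K)$ referenced in Theorem~\ref{thm:algo}; since conclusions a)--c) of that theorem are statements solely about this triple, they then transfer verbatim. The first step is to isolate the role the controller plays inside the algorithm: inspecting the \textbf{Initialization}, \textbf{Monitoring} and \textbf{Control} steps shows that the controller's internal state enters only through (i) its reset to the zero vector whenever the loop is (re)closed and (ii) the computation of the command $u[k]$ during the \textbf{Control} step, whereas every branching test is carried out on $y[k]$ and on fixed polyhedra, none of which depend on the chosen realization. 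Hence it suffices to prove that the command sequence $\{u[k]\}$ produced along the execution is realization-independent.

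The second step decomposes the time axis into the maximal intervals on which $\sigma[k]=1$. On such an interval, say $[k_1,k_2]$, the algorithm has set the controller state to zero at time $k_1$ (either in \textbf{Initialization} or through one of the $x_K[k]\gets 0_{n_K}$ branches of \textbf{Monitoring}) and thereafter propagates it with input $\{y[k]\}_{k\ge k_1}$ according to \eqref{eq:ss_ctl}. A controller started from the zero state therefore outputs $u[k]=\sum_{i=0}^{k-k_1} h_i\, y[k-i]$ for $k\in[k_1,k_2]$, where $h_0:=D_K$ and $h_i:=C_K A_K^{\,i-1}B_K$ for $i\ge 1$ are the Markov parameters of $\mathbf{K}(z)$, i.e.\ the coefficients of its expansion about $z=\infty$. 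Since these parameters are an invariant of the transfer function itself, any other realization, started from its own zero state, returns the identical command sequence on $[k_1,k_2]$.

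With the command sequence pinned down, a straightforward induction on $k\ge k_0$ closes the argument: at each step the plant update in \eqref{eq:ss_sys} uses only $x[k]$, $u[k]$, $\sigma[k]$ and the exogenous data $d[k],w[k]$; the measurement $y[k]=x[k]+v[k]$ is then realization-free; and, by the previous paragraph together with the fact that the switching logic reads only $y[k]$, both $\sigma[k+1]$ and $u[k+1]$ coincide with the values obtained for the original realization. Thus the sequences $(x[k],u[k],\sigma[k])$, and hence the validity of a)--c), are unaffected by the substitution.

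The only delicate point I expect is the bookkeeping in the second step: one must verify that the zero-state re-initialization is indeed the \emph{only} initialization the controller ever receives, so that an input--output (Markov-parameter) description genuinely suffices and no matching of internal states across realizations of differing dimension is needed. Everything else is routine, resting on the well-posedness of the interconnection in \eqref{eq:cl_dyn} (there is no algebraic loop, since $x[k]$ does not depend instantaneously on $u[k]$) and on the observation that $\Omega_I$, $\Omega_O$ and the Schur property of $A_{CL}$ serve only to certify, within Theorem~\ref{thm:algo}, the existence of the objects used by the monitoring logic, none of which are recomputed when the realization is swapped.
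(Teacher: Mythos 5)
Your proposal is correct and rests on exactly the same observation as the paper's own proof: since Algorithm~\ref{alg:ho_periodic} resets the controller state to zero at the start of every closed-loop interval, the command sequence is determined solely by the Markov parameters of $\mathbf{K}(z)$, which are invariant across realizations. The interval decomposition and the induction you add are just explicit bookkeeping around that same one-line argument, which the paper states directly with a citation to Kailath.
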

	
	\begin{proof}
		The result follows from the fact that, when the controller's state is initialized by the zero vector, the command signals depend only upon the Markov parameters (see Section 6.2.1 in \cite{Kai}) of $\mathbf{K}(z)$, which are invariant w.r.t. all of its realizations (see the proof of Theorem~6.2-3 in \cite{Kai}).
	\end{proof}
	
	\begin{remark}
		By leveraging the result from Corollary~\ref{cor:any_real}, notice that we may always obtain a computationally efficient implementation for the controller described by \eqref{eq:ss_ctl}. More precisely, this boils down to expressing a \emph{minimal} (see, for example, Chapter~3 in \cite{zhou}) realization for the aforementioned controller and then employing this representation to compute $u[k]$, rather than the generic one used previously.\vspace{-1mm}
	\end{remark}
	
	Although conditions I1)-I3) and O1)-O4) ensure (as shown in Theorem~\ref{thm:algo}) the desired hands-off behaviour, notice that the conditions related to $\Omega_I$ are significantly more involved than the ones concerning $\Omega_O$. In order to remedy this fact, we proceed to lift the former set into a higher dimension and state the following result, which provides more tractable formulations for the conditions stated in points I1)-I3).\vspace{-1mm}
	
	\begin{theorem}\label{thm:set_design}
		Consider the closed-loop dynamics given in \eqref{eq:cl_dyn}-\eqref{eq:cl_real} along with a scalar $\eta\in(0,1)$ and a set\vspace{-1mm}
		\begin{equation}\label{eq:OI_lifted}
			\hspace{-2mm}\small\begin{array}{l}
				\widetilde{\Omega}_I\hspace{-0.25mm}:=\hspace{-0.25mm}\left\{\xi\hspace{-0.5mm}\in\hspace{-0.5mm}\mathbb{R}^{n+n_K}\hspace{-0.5mm}:\hspace{-0.5mm}\small\begin{bmatrix}
					\phantom{-}\widetilde{H}_I \\ -\widetilde{H}_I
				\end{bmatrix}\hspace{-0.25mm}\xi\hspace{-0.25mm}\leq\hspace{-0.5mm} \small\begin{bmatrix}
					1_{n+n_K} \\ \vspace{-3mm} \\ 1_{n+n_K}
				\end{bmatrix}\hspace{-0.5mm},\hspace{-0.25mm}\,\hspace{-0.25mm}\det\hspace{-0.25mm}\left(\widetilde{H}_I \right)\hspace{-0.25mm}\neq\hspace{-0.25mm} 0\hspace{-0.25mm}\right\}\hspace{-0.25mm}.
			\end{array}\normalsize\hspace{-2mm}\vspace{-1mm}
		\end{equation}
		Assume that $\varepsilon_p+\varepsilon_m<\varepsilon_s<1$ and that the following two conditions hold:
		\begin{enumerate}
			\item[C1)] $A_{CL}\,\widetilde{\Omega}_I\oplus B_{CL}\,\mathcal{Z}\subseteq\eta\,\widetilde{\Omega}_I$;\smallskip
			
			\item[C2)] $\begin{bmatrix}
				I_n & O
			\end{bmatrix}\widetilde{\Omega}_I\subseteq(\varepsilon_s-\varepsilon_p-\varepsilon_m)\,\mathcal{S}$.\smallskip
		\end{enumerate}
		Then, we have that:
		\begin{enumerate}
			\item[a)] The set $\Omega_I = \begin{bmatrix}
				I_n & O
			\end{bmatrix}\widetilde{\Omega}_I\oplus\mathcal{N}$ satisfies I1)-I3) with $\alpha = \frac{1}{\left\|\widetilde{H}_I\right\|}$ and $\beta = \eta$\,;\smallskip
			
			\item[b)] The matrix $A_{CL}$ is Schur.\vspace{-1mm} 
		\end{enumerate}
	\end{theorem}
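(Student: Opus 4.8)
The plan is to establish parts a) and b) by elementary set calculus built on C1)--C2), on the invertibility of $\widetilde{H}_I$, and on the scaling inclusions $\mathcal{V}\subseteq\varepsilon_p\,\mathcal{S}$ and $(-\mathcal{V})\subseteq\varepsilon_m\,\mathcal{S}$ from \eqref{eq:incl_a}. I would begin by recording two facts used throughout. First, since $\mathcal{S}$ is convex with $0_n$ in its interior, $\lambda_1\mathcal{S}\oplus\lambda_2\mathcal{S}=(\lambda_1+\lambda_2)\mathcal{S}$ for all $\lambda_1,\lambda_2\ge 0$, so $\mathcal{N}=(-\mathcal{V})\oplus\mathcal{V}\subseteq\varepsilon_m\mathcal{S}\oplus\varepsilon_p\mathcal{S}=(\varepsilon_p+\varepsilon_m)\mathcal{S}$. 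Second, the standard Minkowski/Pontryagin inclusion $(\mathcal{A}\oplus\mathcal{B})\ominus\mathcal{B}\supseteq\mathcal{A}$, specialised to $\mathcal{A}=\begin{bmatrix}I_n&O\end{bmatrix}\widetilde{\Omega}_I$ and $\mathcal{B}=\mathcal{N}$, gives $\Omega_I\ominus\mathcal{N}\supseteq\begin{bmatrix}I_n&O\end{bmatrix}\widetilde{\Omega}_I$; this lets me replace $\Omega_I\ominus\mathcal{N}$ by the simpler projected set whenever a lower bound on it is needed, as in I2) and I3).

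For part a): I1) follows by combining C2) with $\mathcal{N}\subseteq(\varepsilon_p+\varepsilon_m)\mathcal{S}$ and the scaling identity, since $\varepsilon_s-\varepsilon_p-\varepsilon_m>0$ implies $\Omega_I\subseteq(\varepsilon_s-\varepsilon_p-\varepsilon_m)\mathcal{S}\oplus(\varepsilon_p+\varepsilon_m)\mathcal{S}=\varepsilon_s\mathcal{S}$; that $\Omega_I$ is polyhedral and contains $0_n$ is immediate because linear images and Minkowski sums of polytopes are polytopes and $0\in\widetilde{\Omega}_I$, $0\in\mathcal{N}$. For I2), I would show $\overline{\mathcal{B}}_\alpha\subseteq\begin{bmatrix}I_n&O\end{bmatrix}\widetilde{\Omega}_I$ with $\alpha=1/\|\widetilde{H}_I\|$: if $\|r\|\le\alpha$, the lifted vector $\begin{bmatrix}r^\top&0_{n_K}^\top\end{bmatrix}^\top$ has $2$-norm at most $\alpha$, hence $\bigl\|\widetilde{H}_I\begin{bmatrix}r^\top&0_{n_K}^\top\end{bmatrix}^\top\bigr\|\le 1$, so (using $\|\cdot\|_\infty\le\|\cdot\|$ for vectors and $\widetilde{\Omega}_I=\{\xi:\|\widetilde{H}_I\xi\|_\infty\le 1\}$) it belongs to $\widetilde{\Omega}_I$, and its projection $r$ belongs to $\begin{bmatrix}I_n&O\end{bmatrix}\widetilde{\Omega}_I\subseteq\Omega_I\ominus\mathcal{N}$. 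For I3), I would prove by induction on $N$ that $\bigoplus_{i=0}^{N}A_{CL}^{i}B_{CL}\,\mathcal{Z}\subseteq\eta\,\widetilde{\Omega}_I$: the base case $B_{CL}\mathcal{Z}\subseteq\eta\widetilde{\Omega}_I$ is C1) evaluated at $0\in\widetilde{\Omega}_I$, while the step uses $\bigoplus_{i=0}^{N+1}A_{CL}^{i}B_{CL}\mathcal{Z}=B_{CL}\mathcal{Z}\oplus A_{CL}\bigl(\bigoplus_{i=0}^{N}A_{CL}^{i}B_{CL}\mathcal{Z}\bigr)$, the induction hypothesis (so the inner set lies in $\eta\widetilde{\Omega}_I\subseteq\widetilde{\Omega}_I$), monotonicity of the map $A_{CL}(\cdot)$, and one further application of C1). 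Projecting by $\begin{bmatrix}I_n&O\end{bmatrix}$ and invoking the inclusion from the first paragraph then yields I3) with $\beta=\eta\in(0,1)$.

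For part b): evaluating C1) at $0\in B_{CL}\mathcal{Z}$ (valid since $0\in\mathcal{Z}$) gives $A_{CL}\widetilde{\Omega}_I\subseteq\eta\widetilde{\Omega}_I$. Since $\widetilde{H}_I$ is invertible, $\widetilde{H}_I\widetilde{\Omega}_I$ equals the unit ball of the $\infty$-norm on $\mathbb{R}^{n+n_K}$, so applying the bijection $\widetilde{H}_I$ to both sides converts the inclusion into $M\,\bigl(\widetilde{H}_I\widetilde{\Omega}_I\bigr)\subseteq\eta\,\bigl(\widetilde{H}_I\widetilde{\Omega}_I\bigr)$ with $M:=\widetilde{H}_IA_{CL}\widetilde{H}_I^{-1}$, i.e.\ $\|M\|_\infty\le\eta$; hence $\rho(A_{CL})=\rho(M)\le\|M\|_\infty\le\eta<1$, so $A_{CL}$ is Schur. (Equivalently, the gauge of the symmetric compact convex set $\widetilde{\Omega}_I$ is a norm in which $A_{CL}$ has induced norm at most $\eta$.) I expect the only mildly delicate points to be the bookkeeping for the Minkowski/Pontryagin manipulations and arranging the induction for I3) so that C1) can legitimately be reapplied at each stage; part b) is then essentially a one-line consequence of C1) once the conjugation by $\widetilde{H}_I$ is in place.
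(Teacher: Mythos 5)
Your proof is correct. For part a) you follow essentially the same route as the paper: the same induction (phrased via the recursion $\bigoplus_{i=0}^{N+1}A_{CL}^iB_{CL}\mathcal{Z}=B_{CL}\mathcal{Z}\oplus A_{CL}\bigl(\bigoplus_{i=0}^{N}A_{CL}^iB_{CL}\mathcal{Z}\bigr)$, which is exactly the paper's recursively defined family $\widetilde{\mathcal{Z}}_N$) establishes I3), the same scaling identity for convex sets gives I1), and the same ball $\overline{\mathcal{B}}_{1/\|\widetilde{H}_I\|}$ gives I2); your use of the one-sided inclusion $(\mathcal{A}\oplus\mathcal{B})\ominus\mathcal{B}\supseteq\mathcal{A}$ and your direct lifting of each $r\in\overline{\mathcal{B}}_\alpha$ to $\begin{bmatrix}r^\top & 0_{n_K}^\top\end{bmatrix}^\top$ are only cosmetic simplifications of the paper's projection-of-the-ball argument.

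For part b) your route is genuinely different and tighter. The paper first derives the uniform bound $\|A_{CL}^k\xi\|\leq\eta^k\theta$ on $\widetilde{\Omega}_I$ with $\theta=\|\widetilde{H}_I^{-1}\|\sqrt{n+n_K}$, extends it to all of $\mathbb{R}^{n+n_K}$ by rescaling into $\overline{\mathcal{B}}_\psi$, and then rules out eigenvalues of modulus at least one by a Jordan-form contradiction. You instead observe that $A_{CL}\widetilde{\Omega}_I\subseteq\eta\,\widetilde{\Omega}_I$ (from C1) with $0\in\mathcal{Z}$) conjugates, via the bijection $\widetilde{H}_I$, into $\|\widetilde{H}_IA_{CL}\widetilde{H}_I^{-1}\|_\infty\leq\eta$, whence $\rho(A_{CL})\leq\eta<1$ because the spectral radius is dominated by any induced operator norm and is invariant under similarity. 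This is a one-step argument that avoids both the limit computation and the contradiction, and it additionally yields the quantitative bound $\rho(A_{CL})\leq\eta$, which the paper's proof also implies but does not state. The paper's longer route has the minor pedagogical advantage of producing the explicit decay estimate $\|A_{CL}^k\xi\|\leq\eta^k\theta\|\xi\|/\psi$, which is in the spirit of the Lyapunov-type reasoning used elsewhere in the manuscript, but your argument is the more economical of the two and is fully rigorous.
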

	\begin{proof}
		See the appendix.\vspace{-1mm}
	\end{proof}
	
	The conditions stated in points C1)-C2) are of the same type as those in O1)-O4) and, most importantly, they allow for the joint design of the controller in tandem with the pair of sets $\Omega_I$ and $\Omega_O$. The means by which this may be achieved represents the topic of the paper's next section.\vspace{-1mm}
	
	\begin{algorithm}[t]
		\textbf{Input data:} The dynamics of type \eqref{eq:ss_sys}, its associated sets from Section~\ref{subsec:prob_st}, and the scalars $n_K\geq0$, $0<\eta<1$, $\varepsilon_p+\varepsilon_m<\varepsilon_s<1$ and $0<\varepsilon_{c}\ll 1$\;\smallskip
		
		\textbf{Initialization:} Define $M_0:=\small\begin{bmatrix}
			\widetilde{H}_I^0 & \widetilde{H}_{Ii}^0 & {H}_O^0 & {H}_{Oi}^0
		\end{bmatrix}$,
		$J^0:=\mathrm{tr}\left(M_0^{}M_0^\top\right)$ and $Q^0:=O$, then choose invertible matrices $Y_{I1j}^0$, $Y_{I2j}^0$, $Y_{O1j}^0$, $Y_{O2j}^0$ and $Y_\mathcal{U\ell}^0$ for all $j\in\mathbb{N}_{[1,n+n_K]}$ and all $\ell\in\mathbb{N}_{[1,m]}$ such that $\mathcal{P}^0(Q^0)$ in \eqref{eq:opt_a}-\eqref{eq:opt_s} is feasible, and solve $\mathcal{P}^0(Q^0)$\;\smallskip
		
		\textbf{Iteration setup:} Assign $k\gets 0$ and employ an optimizer of $\mathcal{P}^0\left(Q^0\right)$ to form the matrix $G^0\gets\begin{bmatrix}
			\widetilde{H}_{I}^{0}\widetilde{H}_{Ii}^{0}-I_{n+n_K} & {H}_{O}^{0}{H}_{Oi}^{0}-I_{n+n_K}
		\end{bmatrix}$\;
		
		\Do{$\mathrm{tr}\left(G^{k-1}(G^{k-1})^\top\right)-\mathrm{tr}\left(G^k(G^k)^\top\right)\geq\varepsilon_c\,$}{
			$k\gets k+1$\;
			\For{$j\in\mathbb{N}_{[1,n+n_K]}$}{
				$Y_{I1j}^k\gets \left(X_{I1j}^{k-1}\right)^{-1}\widetilde{H}_{Ii}^{k-1}$\;\smallskip
				$Y_{I2j}^k\gets \left(X_{I2j}^{k-1}\right)^{-1}\left(\widetilde{H}_{I}^{k-1}\right)^\top$\;\smallskip
				$Y_{O1j}^k\gets \left(X_{O1j}^{k-1}\right)^{-1}H_{Oi}^{k-1}$\;\smallskip
				$Y_{O2j}^k\gets \left(X_{O2j}^{k-1}\right)^{-1}\left(H_O^{k-1}\right)^\top$\;
			}
			\For{$\ell\in\mathbb{N}_{[1,n_\mathcal{U}]}$}{
				$Y_{\mathcal{U}\ell}^k\gets \left(X_{\mathcal{U}\ell}^{k-1}\right)^{-1}\left(H_O^{k-1}\right)^\top$\;
			}
			\eIf{$k\mod 2>0$}{$Q^k:=\begin{bmatrix}
					\widetilde{H}_{I}^{k}-\widetilde{H}_{I}^{k-1} & {H}_{O}^{k}-{H}_{O}^{k-1}
				\end{bmatrix}$\;}{$Q^k:=\begin{bmatrix}
					\widetilde{H}_{Ii}^{k}-\widetilde{H}_{Ii}^{k-1} & {H}_{Oi}^{k}-{H}_{Oi}^{k-1}
				\end{bmatrix}$\;}
			$G^k:=\begin{bmatrix}
				\widetilde{H}_{I}^{k}\widetilde{H}_{Ii}^{k}-I_{n+n_K} & {H}_{O}^{k}{H}_{Oi}^{k}-I_{n+n_K}
			\end{bmatrix}$\;\smallskip
			$J^k:=\mathrm{tr}\left(G^k(G^k)^\top\right)$\;\smallskip
			Solve $\mathcal{P}^k\left(Q^k\right)$ and, for its optimizer, evaluate the expression of $G^k$, while storing its value within a variable of the same name\;\smallskip
		}
		\caption{Joint synthesis procedure for the plant's dynamical controller and for the pair of sets.}
		\label{alg:syn_proc}
	\end{algorithm}
	
	\subsection{Synthesis Procedure}\label{subsec:syn_proc}\vspace{-1mm}
	
	One of the key challenges in designing the dynamical system given in \eqref{eq:ss_ctl} simultaneously with a pair of sets which (individually) satisfy C1)-C2) along with O1)-O4) is given by the inherent nonlinearity of the resulting expressions, with respect to the employed design parameters. A particularly suitable solution to overcome this hurdle is the S-procedure-based approach proposed in \cite{svi1} and \cite{svi2}, which notably employs the so-called \emph{slack variable identity} to bypass the aforementioned nonlinearity. To this end, we proceed to select the vectors $\omega_i\in\mathbb{R}^n$, for $i\in\mathbb{N}_{[1,n_c]}$, and a quadruplet of matrices $H_{\mathcal{S}}\in\mathbb{R}^{n_\mathcal{S}\times n}$, $H_{\mathcal{X}}\in\mathbb{R}^{n_\mathcal{X}\times n}$, $H_{\mathcal{U}}\in\mathbb{R}^{n_\mathcal{U}\times m}$ and $H_{\mathcal{Z}}\in\mathbb{R}^{n_{\mathcal{Z}}\times 2n}$ such that\vspace{-1mm}
	\begin{equation}\label{eq:set_H_reps}
		\left\{\begin{aligned}
			\mathcal{S}=&\left\{x\in\mathbb{R}^n:H_\mathcal{S}x\leq1_{n_\mathcal{S}}\right\},\\
			\mathcal{S}_c=&\left\{x\in\mathbb{R}^n:x=\textstyle\sum_{i=1}^{n_c}\alpha_i\omega_i,\alpha_i\geq 0,1=\textstyle\sum_{i=1}^{n_c}\alpha_i\right\},\\
			\mathcal{X}=&\left\{x\in\mathbb{R}^n:H_\mathcal{X}x\leq1_{n_\mathcal{X}}\right\},\\
			\,\mathcal{U}=&\left\{u\in\mathbb{R}^m:H_\mathcal{U}u\leq1_{n_\mathcal{U}}\right\},\\
			\mathcal{Z}\subseteq&\left\{z\in\mathbb{R}^{2n}:-1_{n_{\mathcal{Z}}}\leq H_\mathcal{Z}z\leq 1_{n_{\mathcal{Z}}}\right\}=:\mathcal{Z}_s\,,\\
		\end{aligned}\right.
	\end{equation}
	and we proceed to state the following result, which formalizes the theoretical benefits of our synthesis procedure.
	
	\begin{figure*}
		\begin{subequations}
			\raisebox{-85mm}[0pt][0pt]{$\mathcal{P}^k\left(Q^k\right):=\left\{\begin{array}{l}
					\vspace{160mm}
				\end{array}\right.$}
			\begin{equation}
				\label{eq:opt_a}
				\hspace{5mm}\min\limits_{A_K^k,B_K^k,C_K^k,D_K^k,\widetilde{H}_{I}^{k},\widetilde{H}_{Ii}^{k},{H}_{O}^{k},{H}_{Oi}^{k},X_{I1j}^k,X_{I2j}^k,X_{O1j}^k,X_{O2j}^k,X_{\mathcal{U}\ell}^k,D_{I1j}^k,D_{I2j}^k,D_{O1j}^k,D_{O2j}^k,D_{\mathcal{S}q}^k,D_{\mathcal{X}p}^k,D_{\mathcal{U}1\ell}^k,D_{\mathcal{U}2\ell}^k}J^k\vspace{-2mm}
			\end{equation}
			\begin{empheq}[left=\hspace{20mm}\text{subject to}\empheqlbrace]{align}
				& D_{I1j}^k,D_{I2j}^k,D_{O1j}^k,D_{O2j}^k\succeq O,D_{I1j}^k,D_{I2j}^k,D_{O1j}^k,D_{O2j}^k\text{ diagonal},\,\forall\,j\in\mathbb{N}_{[1,n+n_K]},\label{eq:opt_b}\\
				& D_{\mathcal{S}q}^k\succeq O, D_{\mathcal{S}q}^k\text{ diagonal},\,\forall\,q\in\mathbb{N}_{[1,n_\mathcal{S}]},\label{eq:opt_c}\\
				& D_{\mathcal{X}p}^k\succeq O, D_{\mathcal{X}p}^k\text{ diagonal},\,\forall\,p\in\mathbb{N}_{[1,n_\mathcal{X}]},\label{eq:opt_d}\\
				& D_{\mathcal{U}1\ell}^k,D_{\mathcal{U}2\ell}^k\succeq O, D_{\mathcal{U}1\ell}^k,D_{\mathcal{U}2\ell}^k\text{ diagonal},\,\forall\,\ell\in\mathbb{N}_{[1,n_\mathcal{X}]},\label{eq:opt_e}\\
				& \footnotesize\begin{bmatrix}
					X_{I2j}^k & O & \left(A_{CL}^k\right)^\top\\
					* & \hspace{-1mm}H_{\mathcal{Z}}^\top D_{I2j}^kH_{\mathcal{Z}} & \left(B_{CL}^k\right)^\top\\
					* & * & X_{I1j}^k
				\end{bmatrix}\hspace{-1mm}\succ O,\footnotesize\begin{bmatrix}
					X_{O2j}^k & O & \left(A_{CL}^k\right)^\top\\
					* & \hspace{-1mm}H_{\mathcal{Z}}^\top D_{O2j}^kH_{\mathcal{Z}} & \left(B_{CL}^k\right)^\top\\
					* & * & X_{O1j}^k
				\end{bmatrix}\hspace{-1mm}\succ O,\forall j\in\mathbb{N}_{[1,n+n_K]},\hspace{-1mm}\label{eq:opt_f}\\
				& \footnotesize\begin{bmatrix}
					D_{I1j}^k & I_{n+n_K} \\ * & \left(\widetilde{H}_{I}^kY_{I2j}^k\right)^\top+\widetilde{H}_{I}^kY_{I2j}^k- \left(Y_{I2j}^k\right)^\top X_{I2j}^kY_{I2j}^k
				\end{bmatrix}\succ O,\,\forall\,j\in\mathbb{N}_{[1,n+n_K]},\label{eq:opt_g}\\
				& \footnotesize\begin{bmatrix}
					D_{O1j}^k & I_{n+n_K} \\ * & \left({H}_{O}^kY_{O2j}^k\right)^\top+{H}_{O}^kY_{O2j}^k- \left(Y_{O2j}^k\right)^\top X_{O2j}^kY_{O2j}^k
				\end{bmatrix}\succ O,\,\forall\,j\in\mathbb{N}_{[1,n+n_K]},\label{eq:opt_h}\\
				& \footnotesize\begin{bmatrix}
					\left(Y_{I1j}^k\right)^\top\widetilde{H}_{Ii}^k+\left(\widetilde{H}_{Ii}^k\right)^\top Y_{I1j}^k- \left(Y_{I1j}^k\right)^\top X_{I1j}^kY_{I1j}^k & e_j\\
					* & r_{Ij}
				\end{bmatrix}\succ O,\,\forall\,j\in\mathbb{N}_{[1,n+n_K]},\label{eq:opt_i}\\
				& r_{Ij}=2\eta-1_{n+n_K}^\top D_{I1j}^k1_{n+n_K}^{}-1_{n_{\mathcal{Z}}}^\top D_{I2j}^k1_{n_{\mathcal{Z}}}^{},\,\forall\,j\in\mathbb{N}_{[1,n+n_K]},\label{eq:opt_j}\\
				& \footnotesize\begin{bmatrix}
					\left(Y_{O1j}^k\right)^\top{H}_{Oi}^k+\left({H}_{Oi}^k\right)^\top Y_{O1j}^k- \left(Y_{O1j}^k\right)^\top X_{O1j}^kY_{O1j}^k & e_j\\
					* & r_{Oj}
				\end{bmatrix}\succ O,\,\forall\,j\in\mathbb{N}_{[1,n+n_K]},\label{eq:opt_k}\\
				& r_{Oj}=2-1_{n+n_K}^\top D_{O1j}^k1_{n+n_K}^{}-1_{n_{\mathcal{Z}}}^\top D_{O2j}^k1_{n_{\mathcal{Z}}}^{},\,\forall\,j\in\mathbb{N}_{[1,n+n_K]},\label{eq:opt_l}\\
				& \footnotesize\begin{bmatrix}
					D_{\mathcal{S}q}^k & \left(\begin{bmatrix}
						H_\mathcal{S} & O
					\end{bmatrix}\widetilde{H}_{Ii}^k\right)^\top e_q\\
					* & 2(\varepsilon_s-\varepsilon_m-\varepsilon_p)-1_{n+n_K}^\top D_{\mathcal{S}q}^k1_{n+n_K}
				\end{bmatrix}\succ O,\,\forall\,q\in\mathbb{N}_{[1,n_{\mathcal{S}}]},\label{eq:opt_m}\\
				& \footnotesize\begin{bmatrix}
					D_{\mathcal{X}p}^k & \left(\begin{bmatrix}
						H_\mathcal{X} & O
					\end{bmatrix}{H}_{Oi}^k\right)^\top e_p\\
					* & 2-1_{n+n_K}^\top D_{\mathcal{X}p}^k1_{n+n_K}
				\end{bmatrix}\succ O,\,\forall\,p\in\mathbb{N}_{[1,n_{\mathcal{X}}]},\label{eq:opt_n}\\
				& \footnotesize\begin{bmatrix}
					X_{\mathcal{U}\ell}^k & O & \left(H_{\mathcal{U}}C_{CL}^k\right)^\top e_j\\\vspace{-3mm}\\
					* & H_{\mathcal{Z}}^\top D_{\mathcal{U}2j}^kH_{\mathcal{Z}} & \left(H_{\mathcal{U}}D_{CL}^k\right)^\top e_j\\
					* & * & 2-1_{n+n_K}^\top D_{\mathcal{U}1j}^k1_{n+n_K}^{}-1_{n_{\mathcal{Z}}}^\top D_{\mathcal{U}2j}^k1_{n_{\mathcal{Z}}}^{}
				\end{bmatrix}\succ O,\,\forall\,\ell\in\mathbb{N}_{[1,n_{\mathcal{U}}]},\label{eq:opt_o}\\
				& \footnotesize\begin{bmatrix}
					D_{\mathcal{U}1j}^k & I_{n+n_K} \\ * & \left({H}_{O}^kY_{\mathcal{U}\ell}^k\right)^\top+{H}_{O}^kY_{\mathcal{U}\ell}^k- \left(Y_{\mathcal{U}\ell}^k\right)^\top X_{\mathcal{U}\ell}^kY_{\mathcal{U}\ell}^k
				\end{bmatrix}\succ O,\,\forall\,\ell\in\mathbb{N}_{[1,n_{\mathcal{U}}]},\label{eq:opt_p}\\
				& -1_{n+n_K}\leq H_O^k \begin{bmatrix}
					\omega_i^\top & 0_{n_K}
				\end{bmatrix}^\top \leq1_{n+n_K},\,\forall\,i\in\mathbb{N}_{[1,n_c]},\label{eq:opt_q}\\
				&A_{CL}^k=\footnotesize\begin{bmatrix}
					A+BD_K^k & BC_K^k \\ B_K^k & A_K^k
				\end{bmatrix},\,
				B_{CL}^k=\footnotesize\begin{bmatrix}
					I_n & BD_K^k \\ O & B_K^k
				\end{bmatrix},\,
				C_{CL}^k=\footnotesize\begin{bmatrix}
					D_K^k & C_K^k
				\end{bmatrix},\, D_{CL}^k=\footnotesize\begin{bmatrix}
					O & D_K^k
				\end{bmatrix},\label{eq:opt_r}\\
				& Q^k=O.\label{eq:opt_s}
			\end{empheq}
		\end{subequations}
		\hrulefill\vspace{-6mm}
	\end{figure*}   
	
	\begin{theorem}\label{thm:syn}
		Assume that the optimization problem in the initialization step of Algorithm~\ref{alg:syn_proc} (located on the next page) is feasible, for the provided input data. Then, we have that:
		\begin{enumerate}
			\item[a)] The optimization problem solved in the iterative phase of Algorithm~\ref{alg:syn_proc} is recursively feasible, and the procedure is guaranteed to converge;
			
			\item[b)] For any optimizer of $\mathcal{P}^k\left(Q^k\right)$ and any $k\in\mathbb{N}$, the matrices $\widetilde{H}_I^{k}$, $\widetilde{H}_{Ii}^{k}$, ${H}_O^{k}$ and ${H}_{Oi}^{k}$ are invertible;
			
			\item[c)] If the value of the cost function converges to $0$ (to within a given numerical tolerance) for some iteration $k^\star$, then by using $\widetilde{H}_I^{k^\star}$ to form \eqref{eq:OI_lifted} and ${H}_O^{k^\star}$ to form
			\begin{equation*}
				{\Omega}_O:=\left\{\xi\in\mathbb{R}^{n+n_K}:-1_{n+n_K} \leq {H}_O^{k^\star}\xi\leq
				1_{n+n_K} \right\},
			\end{equation*}
			then the closed-loop system in \eqref{eq:cl_dyn}-\eqref{eq:cl_real} obtained via $(A_{K}^{k^\star},B_{K}^{k^\star},C_{K}^{k^\star},D_{K}^{k^\star})$ satisfies C1)-C2) and O1)-O4).
			
		\end{enumerate}
	\end{theorem}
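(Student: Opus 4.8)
The plan is to establish parts b), a), and c) in that order, since b) provides invertibility facts used by the other two. \emph{Part b)} follows directly from the semidefinite constraints, with no appeal to the cost: for any optimizer, the diagonal blocks of the matrices in \eqref{eq:opt_f} and \eqref{eq:opt_o} force $X_{I1j}^k$, $X_{I2j}^k$, $X_{O1j}^k$, $X_{O2j}^k$ and $X_{\mathcal{U}\ell}^k$ to be positive definite. A Schur complement in \eqref{eq:opt_g} shows that its lower-right block is positive definite, while completing the square (using only $X_{I2j}^k \succ O$) rewrites that same block as $\widetilde{H}_I^k\bigl(X_{I2j}^k\bigr)^{-1}\bigl(\widetilde{H}_I^k\bigr)^{\!\top}$ minus a positive semidefinite term; hence $\widetilde{H}_I^k\bigl(X_{I2j}^k\bigr)^{-1}\bigl(\widetilde{H}_I^k\bigr)^{\!\top} \succ O$, which forces the square matrix $\widetilde{H}_I^k$ to be invertible. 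The same argument on \eqref{eq:opt_i}, \eqref{eq:opt_h} and \eqref{eq:opt_k} gives the invertibility of $\widetilde{H}_{Ii}^k$, ${H}_O^k$ and ${H}_{Oi}^k$.

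For \emph{part a)}, I would prove recursive feasibility by transferring the entire optimizer of $\mathcal{P}^{k-1}(Q^{k-1})$ verbatim and checking that it is feasible for $\mathcal{P}^{k}(Q^{k})$. Since every decision matrix keeps its previous value we get $Q^{k}=O$, so \eqref{eq:opt_s} holds, and all constraints not involving the slack matrices $Y_\bullet^k$ --- namely \eqref{eq:opt_b}--\eqref{eq:opt_f}, \eqref{eq:opt_j}, \eqref{eq:opt_l}, \eqref{eq:opt_m}--\eqref{eq:opt_o}, \eqref{eq:opt_q} and \eqref{eq:opt_r} --- hold unchanged. For the $Y$-dependent constraints \eqref{eq:opt_g}, \eqref{eq:opt_h}, \eqref{eq:opt_i}, \eqref{eq:opt_k} and \eqref{eq:opt_p}, the update rules of Algorithm~\ref{alg:syn_proc} are precisely those for which the Hermitian block in question becomes the quantity $(\cdot)\bigl(X_\bullet^{k-1}\bigr)^{-1}(\cdot)^{\!\top}$ (suitably transposed) that the slack-variable identity produces as an upper bound; at the previous iteration that block was dominated by this quantity, so it only grows while the rest of the block matrix is untouched, and monotonicity of the Schur complement preserves the strict inequality. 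The carried-over point evaluates $G^{k}$ to $G^{k-1}$, hence attains the objective value $J^{k-1}$, so the optimal value obeys $J^{k}\le J^{k-1}$ (with $J^{1}$ bounded by $\mathrm{tr}\bigl(G^{0}(G^{0})^{\!\top}\bigr)$); since $J^{k}=\mathrm{tr}\bigl(G^{k}(G^{k})^{\!\top}\bigr)\ge 0$, the sequence $\{J^{k}\}$ is nonincreasing and bounded below, thus convergent, and the stopping test of Algorithm~\ref{alg:syn_proc} must fail after finitely many iterations.

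For \emph{part c)}, assume the cost vanishes at iteration $k^\star$, so that $\widetilde{H}_I^{k^\star}\widetilde{H}_{Ii}^{k^\star}=I_{n+n_K}$ and ${H}_O^{k^\star}{H}_{Oi}^{k^\star}=I_{n+n_K}$; part b) then forces $\widetilde{H}_{Ii}^{k^\star}=\bigl(\widetilde{H}_I^{k^\star}\bigr)^{-1}$ and ${H}_{Oi}^{k^\star}=\bigl({H}_O^{k^\star}\bigr)^{-1}$, so that both $\widetilde{\Omega}_I$ from \eqref{eq:OI_lifted} and $\Omega_O$ simultaneously admit a halfspace description (through $\widetilde{H}_I^{k^\star}$, resp. ${H}_O^{k^\star}$) and a generator description (through $\widetilde{H}_{Ii}^{k^\star}$, resp. ${H}_{Oi}^{k^\star}$). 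With both descriptions available, each of C1)--C2) and O1)--O4) becomes an S-procedure-type set inclusion over the symmetric box $\mathcal{Z}_s \supseteq \mathcal{Z}$ of \eqref{eq:set_H_reps}, and I would verify that \eqref{eq:opt_f}--\eqref{eq:opt_q} are exactly the corresponding certificates once the Schur complements are taken and the slack-variable identity is used --- with $Y_\bullet^{k^\star}$ and the diagonal matrices $D_\bullet^{k^\star}$ acting, respectively, as (one-sided) linearizations of the inverses $\bigl(\widetilde{H}_\bullet^{k^\star}\bigr)^{-1}$ and as S-procedure multipliers for the box constraints on $\xi$ and on $z$. Concretely: \eqref{eq:opt_q} puts every vertex of $\mathcal{S}_c\times\{0_{n_K}\}$ inside $\Omega_O$, so O1) follows by convexity; \eqref{eq:opt_m} and \eqref{eq:opt_n} are single-Schur-complement certificates of C2) and O3) after the relevant inverse is substituted; the group \eqref{eq:opt_f}, \eqref{eq:opt_g}, \eqref{eq:opt_i}--\eqref{eq:opt_j} certifies the $\eta$-contractivity C1) of $\widetilde{\Omega}_I$ along $(A_{CL},B_{CL})$; its counterpart \eqref{eq:opt_f}, \eqref{eq:opt_h}, \eqref{eq:opt_k}--\eqref{eq:opt_l} certifies the invariance O2); and \eqref{eq:opt_o}--\eqref{eq:opt_p} certify the input constraint O4).

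The main obstacle is the bookkeeping in part c) for the robust-invariance conditions C1) and O2): one has to carry several nested Schur complements in $A_{CL}$, $B_{CL}$, the scaling matrices and both descriptions of the invariant set at once, pair the diagonal blocks $D_{\bullet 1j}^{k^\star}$ and $D_{\bullet 2j}^{k^\star}$ correctly with the box constraints on $\xi$ and on $z$, and confirm that fixing the slack term to its identity value is exactly what converts the relaxed inequality into the sought inclusion with the prescribed contraction factor. Conditions C2), O1), O3), O4) are, by comparison, routine single-inequality S-procedure arguments, and parts a)--b) are standard manipulations once the slack-variable identity is in hand.
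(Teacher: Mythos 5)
Your proposal follows the same overall strategy as the paper's proof: recursive feasibility is obtained by carrying the previous optimizer over verbatim and using the slack-variable identity to show that the blocks involving the updated $Y_\bullet^k$ dominate their predecessors by a positive semidefinite remainder; convergence follows from monotone non-increase of the bounded-below cost; and part c) rests on reading the constraints \eqref{eq:opt_b}--\eqref{eq:opt_q} as S-procedure/Schur-complement certificates of C1)--C2) and O1)--O4) once the vanishing cost forces $\widetilde{H}_{Ii}^{k^\star}=\big(\widetilde{H}_I^{k^\star}\big)^{-1}$ and ${H}_{Oi}^{k^\star}=\big({H}_O^{k^\star}\big)^{-1}$, with the diagonal $D_\bullet$ blocks as multipliers for the box constraints and the $Y_\bullet$ as one-sided linearization points --- your constraint-to-condition mapping matches the paper's exactly. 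The one place you genuinely depart from the paper is part b): the paper proves invertibility of $\widetilde{H}_I^k$ by an induction that propagates invertibility of the slack matrices $Y_{I2j}^k$ from the (assumed invertible) initial choices, arguing that a left-null vector of $\widetilde{H}_I^k$ would contradict $\big(\widetilde{H}_I^kY_{I2j}^k\big)^\top+\widetilde{H}_I^kY_{I2j}^k\succ O$. Your completing-the-square argument --- the lower-right block of \eqref{eq:opt_g} equals $\widetilde{H}_I^k\big(X_{I2j}^k\big)^{-1}\big(\widetilde{H}_I^k\big)^\top$ minus a positive semidefinite term, so its positive definiteness forces $\widetilde{H}_I^k\big(X_{I2j}^k\big)^{-1}\big(\widetilde{H}_I^k\big)^\top\succ O$ and hence invertibility of the square matrix $\widetilde{H}_I^k$ --- is direct, needs no induction, and uses no property of $Y_{I2j}^k$; it is arguably cleaner and buys you independence from the initialization of the slack matrices. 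The only caveat is that part c) is left at outline level: a complete write-up must carry out at least one of the robust-invariance chains in full (the paper does C1) via \eqref{eq:C1_S_proc}, two applications of Theorem III.1 of \cite{svi1}, congruence transformations exploiting the invertibility of $\widetilde{H}_I$, and the slack-variable identity), but your outline names all the right ingredients and is consistent with that derivation, so I see no gap in the approach.
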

	
	\begin{proof}
		See the appendix.
	\end{proof}
	
	Before moving on to a numerical example, which showcases the applicability of our proposed procedure, we touch upon three crucial aspects related to Algorithm~\ref{alg:syn_proc}, namely:\vspace{-1mm}
	\begin{enumerate}
		\item \emph{Convexity}: As discussed in the proof of Theorem~\ref{thm:syn}, the entire purpose of our procedure's iterative phase is to convexify the bilinear constrains $\widetilde{H}_I\widetilde{H}_{Ii}=I_{n+n_K}$ and ${H}_O{H}_{Oi}=I_{n+n_K}$. By exploiting the direct multiplication between a pair of variables and by locking one of them in place via the constraint $Q^k=O$, the process of convexifying the optimization problem is done in a similar, yet less computationally expensive manner than the more general case tackled in \cite{aug_sparse}.
		
		\smallskip\vspace{-1mm}
		
		\item \emph{Convergence}: Although Algorithm~\ref{alg:syn_proc} is guaranteed to converge, note that the value of the cost function may not always converge to 0, even when a solution exists, as discussed in \cite{BMI2LMI} (from which our approach takes direct inspiration). As is often the case with non-convex optimization, the procedure's initialization is crucial, and this observation is of paramount importance for our proposed solution. To this end, we indicate the line search-based initialization technique presented at the end of Section IV in \cite{svi2} as being a reliable remedy.
		
		\smallskip
		
		\item \emph{Connections}: Bilinear equality constraints aside, the convexification procedure employed in the proof of Theorem~\ref{thm:syn} is most closely associated with the \emph{slack variable identity} used in \cite{svi1} and in \cite{svi2} to linearize a series of matrix inequalities. However, we point out that the direct convexification performed in these papers (without requiring an iterative phase, as in Algorithm~\ref{alg:syn_proc}) is owed to the absence of any state measurement noise ($v[k]$, in our case) and to the joint design of the control law in tandem with \emph{a single invariant set}. When two sets appear in the formulation, the need to ensure that \emph{the same} state feedback ensures invariance for both of them forces the inclusion of bilinear equality constraints. Similar complications arise when considering non-zero measurement noise.
	\end{enumerate}
	
	This being said, we now present an application for our proposed control algorithm.
	
	\section{Numerical Example}\label{sec:num_ex}
	
	We present here a simple yet illustrative numerical example that highlights the practical potential of the proposed design framework. For a sampling time of $T_s=0.1$ seconds, consider the discretized equations
	\begin{equation}\label{eq:sys_ex}
		\left\{\begin{aligned}
			&\begin{bmatrix}
				\Delta p[k+1] \\ \Delta s[k+1]
			\end{bmatrix} = \begin{bmatrix}
				1 & T_s \\ 0 & 1
			\end{bmatrix} \begin{bmatrix}
				\Delta p[k] \\ \Delta s[k]
			\end{bmatrix} + \sigma[k]\begin{bmatrix}
				\frac{T_s^2}{2} \\ T_s
			\end{bmatrix}u[k] +\\ &\qquad+(1-\sigma[k])\begin{bmatrix}
				\frac{T_s^2}{2} \\ T_s
			\end{bmatrix}u_d[k] + \begin{bmatrix}
				\frac{T_s^2}{2} \\ T_s
			\end{bmatrix} (w_a[k]-a_r[k]),\\
			&\begin{bmatrix}
				y_p[k]  \\ y_v[k]
			\end{bmatrix} = \begin{bmatrix}
				\Delta p[k] \\ \Delta s[k]
			\end{bmatrix} + \begin{bmatrix}
				v_p[k] \\ v_s[k]
			\end{bmatrix},
		\end{aligned}\right.
	\end{equation}
	which are dynamics of type \eqref{eq:ss_sys}, and where we have that:\smallskip
	\begin{enumerate}
		\item[a)] $\Delta p[k]$ and $\Delta s[k]$ are the relative position and speed of an idealized vehicle, with respect to a virtual reference body (whose position and speed are initialized in $0$);\medskip
		
		\item[b)] $u[k]$ represents the acceleration implemented by the cruise-control system implemented via \eqref{eq:ss_ctl};\smallskip
		
		\item[c)] $u_d[k]$ is the acceleration being generated by the decisions of the car's human driver;\smallskip
		
		\item[d)] $a_r[k]$ is the acceleration of the virtual reference body, which is computed by some reference governor;\smallskip
		
		\item[e)] $w_a[k]$ models acceleration disturbance produced by uneven road conditions;\smallskip
		
		\item[f)] $y_p[k]$ and $y_s[k]$ are the state measurements affected by the sensor noise, modelled via $v_p[k]$ and $v_s[k]$.\smallskip
	\end{enumerate}
	
	To this dynamical model, we associate the sets discussed in Section~\ref{subsec:prob_st}. We denote $B:=\begin{bmatrix}
		\frac{T_s^2}{2} & T_s
	\end{bmatrix}^\top$ and we assume that the actions of the driver are limited to $50\%$ of the car's maximum capacity, to define $\mathcal{U}:=\{u\in\mathbb{R}:|u|\leq 50\}$ and $\mathcal{D}:=\{Bu_d\in\mathbb{R}^2:|u_d|\leq 25\}$. We treat $a_r[k]$ as a disturbance, whose impact on $\Delta p[k]$ and $\Delta s[k]$ must be managed by either the driver or by the cruise-control system, in order to define the set
	\begin{equation*}
		\mathcal{W}:=\left\{\begin{bmatrix}
			-B \ B
		\end{bmatrix}\begin{bmatrix}
			a_r \ w_a
		\end{bmatrix}^\top\hspace{-1mm}\in\mathbb{R}^2:|a_r|\leq 1,\ |w_a|\leq0.5\right\}.
	\end{equation*} 
	For the measurement noise, we consider
	\begin{equation*}
		\mathcal{V}:=\left\{\begin{bmatrix}
			v_p\ v_s
		\end{bmatrix}^\top\hspace{-1mm}\in\mathbb{R}^2:|v_p|\leq0.01,\ |v_s|\leq0.01\right\},
	\end{equation*}
	which enables us to choose $\varepsilon_p=\varepsilon_m=0.01$, along with
	\begin{equation*}
		\mathcal{S}:=\left\{\begin{bmatrix}
			\Delta_p\ \Delta_s
		\end{bmatrix}^\top\hspace{-1mm}\in\mathbb{R}^2:|\Delta_p|\leq1,\ |\Delta_s|\leq1\right\}.
	\end{equation*}
	Finally, we select the set
	\begin{equation}\label{eq:X_set}
		\mathcal{X}:=\left\{\begin{bmatrix}
			\Delta_p\ \Delta_s
		\end{bmatrix}^\top\hspace{-1mm}\in\mathbb{R}^2:|\Delta_p|\leq10^3,\ |\Delta_s|\leq10^2\right\}.
	\end{equation}
	It is straightforward to check that \eqref{eq:incl_a}-\eqref{eq:incl_c} hold. Thus, we employ the procedure discussed in Section~\ref{subsec:syn_proc} to obtain both the controller from \eqref{eq:ss_ctl} and the pair of sets $\Omega_I$ and $\Omega_O$.

	\begin{remark}
		We point out that the large values chosen in \eqref{eq:X_set} for the components of $\mathcal{X}$ are selected in this manner only for the sake of convenience, in order to allow for a straightforward feasible initialization of the iterative synthesis procedure presented in Section~\ref{subsec:syn_proc}. As shown in the sequel, this particular choice is in no way conservative, since the obtained $\Omega_O$ set is located far away from the frontier of $\mathcal{X}$ and it forms a tight outer approximation of $\mathcal{S}_c\,$.
	\end{remark}
	By selecting $n_K=2$, $\varepsilon_s=0.6$ and $\eta = 0.99$, the aforementioned procedure produces a dynamical controller having the following realization:
	\begin{equation}\label{eq:real_ex}
		\hspace{-2mm}\begin{array}{ll}
			A_K = \left[\footnotesize\begin{array}{rr}
				-0.1155  & -0.5041 \\ 0.0831  &  0.2651
			\end{array}\right],&\hspace{-2mm} C_K =\hspace{0.5mm} \left[\footnotesize\begin{array}{rr}
				\phantom{-}0.1478  \\  0.5393
			\end{array}\right]^\top,\\
			B_K = \left[\footnotesize\begin{array}{rr}
				-0.0183  & -0.1109 \\ -0.2731  & -0.0135
			\end{array}\right],&\hspace{-2mm} D_K = \left[\footnotesize\begin{array}{rr}
				-6.5049  \\ -8.7258
			\end{array}\right]^\top, 
		\end{array}\hspace{-2mm}
	\end{equation}
	along with the sets $\Omega_I = \left\{x\in\mathbb{R}^2:-1_6\leq H_I\, x \leq 1_6\right\}$, $\Omega_O = \left\{\xi\in\mathbb{R}^4:-1_4\leq H_O\, \xi \leq 1_4\right\}$ and the one in \eqref{eq:OI_lifted}, where we have:
	\begin{equation*}
		\begin{aligned}
			H_I=&\left[\footnotesize\begin{array}{rr}
				\phantom{-}0.8423  &  2.1526\\
				2.7910  &  0.1968\\
				2.6817  &  0.1069\\
				1.1249  & -0.9540\\
				2.5346  &  0.0000\\
				0.0000   & 1.6716
			\end{array}\right],\\
			\widetilde{H}_I=&\left[\footnotesize\begin{array}{rrrr}
				3.0373  &  \phantom{-}0.2146  &  \phantom{-}0.0074  &  \phantom{-}0.0249\\
				0.9882  &  2.5404  &  0.0724  &  0.0430\\
				-0.0131  &  0.0180  &  0.8019  &  0.1617\\
				-0.1606  &  0.0056  &  0.2297  &  1.1469
			\end{array}\right],\\
			H_O=&\left[\footnotesize\begin{array}{rrrr}
				0.2318  &  \phantom{-}0.0169  & -0.0003  & -0.0027\\
				0.1243  &  0.1934  & -0.0033  & -0.0118\\
				0.0038  &  0.0066  &  0.6959  &  0.1386\\
				-0.0021  &  0.0062  &  0.0626  &  0.5940
			\end{array}\right].
		\end{aligned}
	\end{equation*}
	
	Once again, it is straightforward to check that, whenever $\sigma[k]=1$, the closed-loop system formed by the dynamics from \eqref{eq:sys_ex} and the controller whose realization is given in \eqref{eq:real_ex} satisfies the conditions stated in Section~\ref{sec:main_res} for the given sets. In order to test the obtained control laws, we perform a simulation in which the signals $w_a[k]$, $v_p[k]$ and $v_s[k]$ are generated randomly, with uniform distribution, in their associated sets, while $a_r[k]$ and $u_d[k]$, which are the chief factors that contribute to the simulation scenario, are depicted in Figure~\ref{fig:accels}. By applying Algorithm~\ref{alg:ho_periodic}, the state of the system given in \eqref{eq:sys_ex} evolves as depicted in Figure~\ref{fig:state_evol}, in which the $\Delta$ markers indicate the activation of the feedback controller, while the $\nabla$ markers indicate its deactivation. The type of functioning described in Section~\ref{subsec:prob_st} is ensured by our control laws, with the resulting command and controller switching\footnote{For the signal $\sigma[k]-\sigma[k-1]$ in Figure~\ref{fig:cmd_and_switch}, a value of $1$ marks controller activation, $-1$ indicates controller deactivation, and $0$ represents no change.} signals being shown in Figure~\ref{fig:cmd_and_switch}, located at the top of the next page. {Note that $\sigma[0]=0$ and, therefore, the plant is in open-loop configuration from initialization up until $k=60$, as illustrated in the lower half of Figure~\ref{fig:cmd_and_switch}. Moreover, the runtime information associated with Algorithm~\ref{alg:ho_periodic} is presented in Table~\ref{tab:run_tim}, based on the 300 samples that form the simulation scenario depicted in Figures~\ref{fig:accels},~\ref{fig:state_evol}, and~\ref{fig:cmd_and_switch}. The values from this table show that all computation times are well below 1 millisecond in duration.}\vspace{-2mm}
	
	{
		\begin{table}[H]
			\centering
			\begin{tabular}{c|ccccc}
				Computation time&min&max&mean&median&mode\\\hline\hline
				Duration (milliseconds)&0.001&0.053&0.006&0.005&0.005
			\end{tabular}
			\caption{} 
			\label{tab:run_tim}
		\end{table}
	}\phantom{}\vspace{-17mm}
	
	\begin{figure}[H]
		\centering
		\includegraphics[width=\columnwidth]{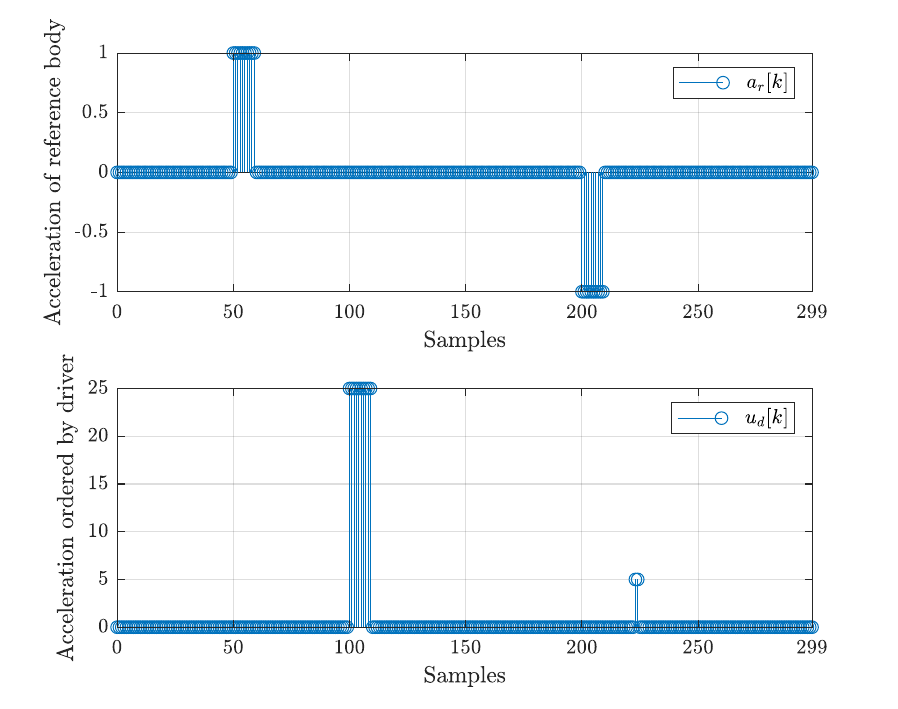}
		\caption{The external acceleration profiles that act upon the vehicle.}
		\label{fig:accels}
	\end{figure}\phantom{ }\vspace{-7mm}
	
	\begin{remark}
		The driver-induced burst of positive acceleration between $k=224$ and $k=225$ was timed to coincide with the controller deactivation at $k=224$. Notice, however, that the controller does not switch back on. This is owed to the scaling parameter $\varepsilon_s$ which, as previously explained in Remark~\ref{rem:eps}, has been employed in order to place distance between the frontiers of $\Omega_I$ and $\mathcal{S}$. As shown in the presented simulation scenario, this distance can indeed help prevent rapid oscillations between the system's two configurations.\vspace{-5mm}
	\end{remark}
	
	\begin{figure}[H]
		\centering
		\includegraphics[width=\columnwidth]{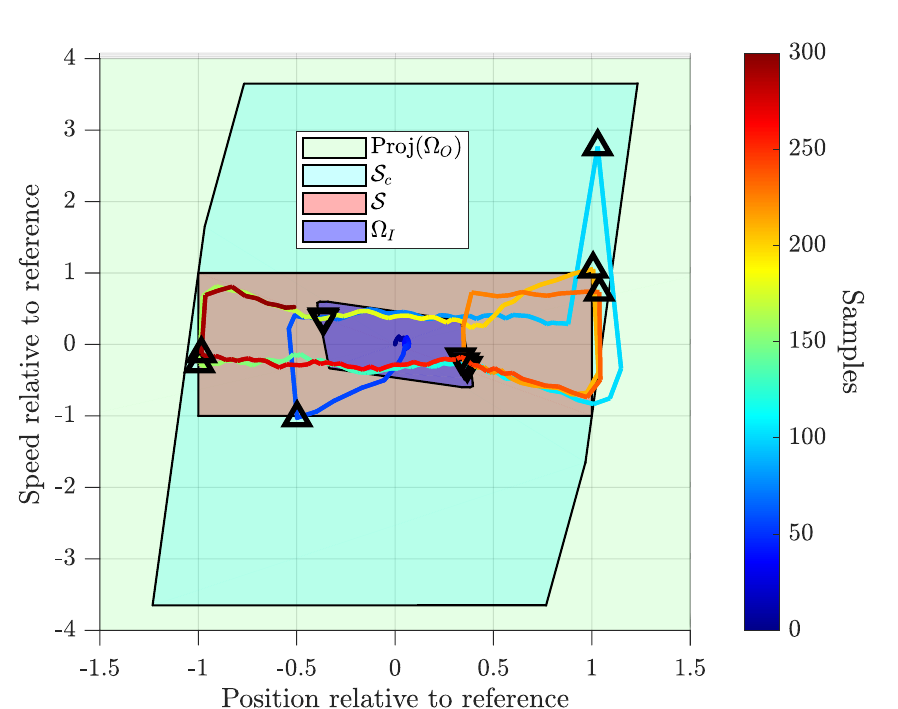}
		\caption{State evolution of the vehicle, under the action of Algorithm~\ref{alg:ho_periodic}.}
		\label{fig:state_evol}
	\end{figure}
	
	\begin{figure}[H]
		\centering
		\includegraphics[width=\columnwidth]{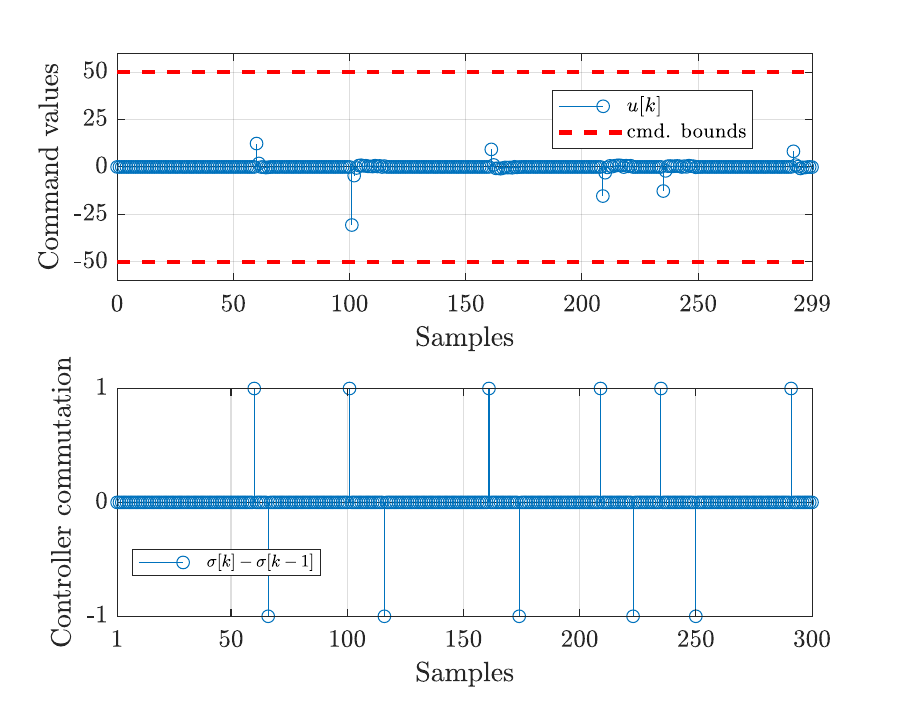}
		\caption{The feedback actuation and switching computed by Algorithm~\ref{alg:ho_periodic}.}
		\label{fig:cmd_and_switch}
	\end{figure}    
	
	\section{Conclusions and Future Work}\label{sec:outro}
	
	The concept of hands-off control was shown to extend naturally from the time-based perspective presented in \cite{ho_book} and \cite{ho_art} towards a set-based spatial sense, in which an inner-outer set pair hedges the frontier of the control (de)activation set. Crucially, the use of dynamical state-space-based controllers opens up a particularly tantalizing avenue for future research. By leveraging the distributed control law parametrization formalized in \cite{NRF}, the type of hands-off strategy proposed in this paper could be expanded beyond its current centralized setting. Indeed, the possibility of having multiple distributed sub-controllers that trigger independently and act only on local variables, as in Algorithm~\ref{alg:ho_periodic} (without the need for consensus-based mechanisms \cite{ho_dist}), can be a powerful asset whenever centralized decision policies are impractical to implement.

	\section*{APPENDIX}
	
	\textbf{Proof of Theorem~\ref{thm:algo}}\smallskip
	
	We begin by proving point c). Notice that, in Algorithm~\ref{alg:ho_periodic}, $\sigma[k]$ is set to $0$ only if $y[k]\in(\mathcal{S}\ominus(-\mathcal{V}))$ (which is non-empty, recalling Remark~\ref{rem:measurement}). Since $x[k] = y[k] + (-v[k])$, it follows that the aforementioned inclusion implies the fact that
	\begin{equation*}
		x[k]\in((\mathcal{S}\ominus(-\mathcal{V}))\oplus(-\mathcal{V}))\subseteq\mathcal{S}.
	\end{equation*}
	
	To prove point b), we will treat the twin cases concerning the latter in a unitary fashion. Indeed, by assumption, we have that $x[k_0]\in\mathcal{S}_c$\,. On the other hand, we have that $\sigma[k-1]=0$ for the time instants $k>k_0$, from which point c) ensures that $x[k-1]\in\mathcal{S}$ and, additionally, that $x[k]\in\mathcal{S}^+\subseteq\mathcal{S}_c$\,, due to \eqref{eq:ss_sys} and to $\sigma[k-1]=0$. Therefore, let $k_{\bullet}$ denote either $k_0$ or $k$, allowing us to state that $x[k_{\bullet}]\in\mathcal{S}_c$. In either of these two cases, Algorithm~\ref{alg:ho_periodic} mandates that the controller's state be initialized as $x_K[k_{\bullet}]=0_{n_K}$ and that the \textbf{Control} step be iterated until there exists some $T_{\bullet}\in\mathbb{N}_{>0}$, for which $y[k_{\bullet}+T_{\bullet}]\in(\Omega_I\ominus(-\mathcal{V}))$. Note also that the latter set is non-empty, due to I2) and to Assumption~\ref{asu:sets}. It follows that, until the latter inclusion is satisfied, the system from \eqref{eq:ss_sys} evolves 
	
	\newpage\noindent
	in closed-loop configuration with the one from \eqref{eq:ss_ctl}, as per the dynamics given in \eqref{eq:cl_dyn}. In this configuration, the system's state can be expressed, for some $j\in\mathbb{N}_{>0}$\,, via the identities
	\begin{equation}\label{eq:cl_func}
		\left\{
		\begin{aligned}
			x[k_\bullet+j]\phantom{:}=&\ x_i[k_\bullet+j] + x_z[k_\bullet+j]\,,\\
			x_i[k_\bullet+j]:=&\ \begin{bmatrix}
				I_n \ O
			\end{bmatrix}A_{CL}^j \begin{bmatrix}
				x^\top[k_{\bullet}] & 0_{n_K}^\top
			\end{bmatrix}^\top\hspace{-1mm},\\
			x_z[k_\bullet+j]:=&\ \sum_{i=1}^j\begin{bmatrix}
				I_n \ O
			\end{bmatrix}A_{CL}^{i-1}B_{CL}^{}z[k_{\bullet}+i-1]\,.
		\end{aligned}
		\right.\hspace{-3mm}
	\end{equation}
	
	In order to show that $\sigma[k]$ switches from $1$ to $0$ in at most $T_{\mathrm{max}}$ time instants, for some $T_{\mathrm{max}}\in\mathbb{N}_{>0}$, we first denote
	\begin{equation}\label{eq:ZN_def}
		\mathcal{Z}_N:=\bigoplus_{i=0}^N\begin{bmatrix}
			I_n \ \, O
		\end{bmatrix}A_{CL}^iB_{CL}^{ }\mathcal{Z},\,\forall\, N\in\mathbb{N}\,,
	\end{equation}
	and we point out that, since $0_{2n}\in\mathcal{Z}:=\mathcal{W}\times\mathcal{V}$ (due to Assumption~\ref{asu:sets}), it follows that $\mathcal{Z}_{N_1}\subseteq\mathcal{Z}_{N_2}$, for any two integers $0\leq N_1\leq N_2$. Using this fact in conjunction with I3), we obtain the following inclusions
	\begin{equation}\label{eq:forced_incl}
		x_z[k_{\bullet}+j]\in\mathcal{Z}_j\subseteq\beta(\Omega_I\ominus\mathcal{N}),\,\forall\,j\in\mathbb{N}_{>0}\,.
	\end{equation}

	Define $\mu:=\sup_{r\in\mathcal{S}_c}\|r\|$, which is finite, due to Assumption~\ref{asu:sets}, and strictly positive, due to \eqref{eq:incl_b}, in order to notice that $\|x_i[k_{\bullet}+j]\|\leq\mu\|A_{CL}^j\|$. Recalling that $A_{CL}$ is a Schur matrix, it follows that $\lim_{j\rightarrow\infty}\|A_{CL}^j\|=0$. Consequently, there exists $T_{\mathrm{max}}$ such that $\|A_{CL}^j\|\leq\frac{\alpha(1-\beta)}{\mu},\forall\,j\geq T_{\mathrm{max}}$, which further implies $x_i[k_{\bullet}+j]\in(1-\beta)\,\overline{\mathcal{B}}_{\alpha},\forall\,j\geq T_{\mathrm{max}}$. Combining the latter inclusion with I2), we get that
	\begin{equation}\label{eq:free_incl}
		x_i[k_{\bullet}+T_{\mathrm{max}}]\in(1-\beta)(\Omega_I\ominus\mathcal{N}).
	\end{equation}
	
	We now proceed to embed \eqref{eq:forced_incl} and \eqref{eq:free_incl} into \eqref{eq:cl_func}, and we employ the polytopic properties stated in Assumption~\ref{asu:sets} to obtain that $x[k_{\bullet}+T_{\mathrm{max}}]\in(\Omega_I\ominus\mathcal{N})$. Since the identity $y[k_{\bullet}+T_{\mathrm{max}}]=x[k_{\bullet}+T_{\mathrm{max}}]+v[k_{\bullet}+T_{\mathrm{max}}]$ holds, we get $y[k_{\bullet}+T_{\mathrm{max}}]\in(\Omega_I\ominus(-\mathcal{V}))\subseteq(\mathcal{S}\ominus(-\mathcal{V}))$, where the latter set inclusion follows from I1), $\varepsilon_s\in(0,1)$ and Assumption~\ref{asu:sets}. By recalling now the conditional statements located in Algorithm~\ref{alg:ho_periodic} just after the \textbf{Monitoring} step, we finally conclude that $\sigma[k_\bullet+T_{\mathrm{max}}]$ will always be set to $0$, provided the system's state has not been brought to $(\Omega_I\ominus\mathcal{N})$ at an earlier time instant $k_{\bullet}+T_{\bullet}$\,, for some $T_{\bullet}\in\mathbb{N}_{[1,T_\mathrm{max}]}$.
	
	We conclude the proof by showing that the statement from point a) holds. Indeed, whenever $\sigma[k]=0$, we have $x[k]\in\mathcal{X}$, due to point c) and to $\mathcal{S}\subset\mathcal{X}$. Moreover, direct inspection of Algorithm~\ref{alg:ho_periodic} shows that whenever $\sigma[k]$ is set to $0$, we have that $u[k]=0_m\in\mathcal{U}$, due to Assumption~\ref{asu:sets}. 
	
	In order to show that the same inclusions hold for those time instants when $\sigma[k]=1$, we employ the properties from points O1)-O4). We reuse the $k_{\bullet}$ employed in the proof of point b) to state that $\begin{bmatrix}
		x^\top[k_{\bullet}] &0^\top_{n_K}
	\end{bmatrix}\hspace{-1mm}{}^\top\in\mathcal{S}_c\times\{0_{n_K}\}\subseteq\Omega_O$, with the latter set inclusion being owed to O1). Moreover, as discussed in the proof of point b), the closed-loop system evolves according to the dynamics from \eqref{eq:cl_dyn} for as long as $\sigma[k]=1$. By employing now O2), we can state the fact that $\sigma[k]=1\Rightarrow\xi[k]\in\Omega_O$ and, since $x[k]=\begin{bmatrix}
		I_n & O
	\end{bmatrix}\xi[k]$, the set inclusion from O3) further implies that if $\sigma[k]=1$, then $x[k]\in\mathcal{X}$. Similarly, the implication $\sigma[k]=1\Rightarrow u[k]\in\mathcal{U}$ follows from \eqref{eq:cl_dyn} and from the set inclusion given in O4). \qed\smallskip
	
	\textbf{Proof of Theorem~\ref{thm:set_design}}\smallskip
	
	To prove point a), we denote $\widetilde{\mathcal{Z}}_0:=B_{CL}\,\mathcal{Z}$ in order to recursively define $\widetilde{\mathcal{Z}}_{N+1}:=A_{CL}\widetilde{\mathcal{Z}}_N\oplus\widetilde{\mathcal{Z}}_0$, for all $N\in\mathbb{N}$. Due to \eqref{eq:OI_lifted} and to C1), we have $\widetilde{\mathcal{Z}}_0\subseteq\eta\,\widetilde{\Omega}_I\subseteq\widetilde{\Omega}_I$. Then, it follows that $\widetilde{\mathcal{Z}}_N\subseteq\eta\,\widetilde{\Omega}_I$, for all $N\in\mathbb{N}$. Recalling the sets defined in \eqref{eq:ZN_def} and employing C2) along with the fact that $\begin{bmatrix}
		I_n & O
	\end{bmatrix}(\eta\,\widetilde{\Omega}_I) = \eta\big(\begin{bmatrix}
		I_n & O
	\end{bmatrix}\widetilde{\Omega}_I\big)$ (this follows by describing the polytope $\widetilde{\Omega}_I$ in its vertex-based representation), we obtain the following inclusions
	\begin{equation}\label{eq:proj_incl}
		\frac{1}{\eta}\mathcal{Z}_N\subseteq\begin{bmatrix}
			I_n & O
		\end{bmatrix}\widetilde{\Omega}_i\subseteq(\varepsilon_s-\varepsilon_p-\varepsilon_m)\,\mathcal{S},\,\forall\,N\in\mathbb{N}.
	\end{equation}
	By taking the inclusion on the left-hand side of \eqref{eq:proj_incl} and by recalling the fact that $\Omega_I = \begin{bmatrix}
		I_n & O
	\end{bmatrix}\widetilde{\Omega}_I\oplus\mathcal{N}$, we employ the pair of equivalent conditions
	\begin{equation}\label{eq:eq_cond_I3}
		\frac{1}{\eta}\mathcal{Z}_N\oplus\mathcal{N}\subseteq\Omega_I\iff\frac{1}{\eta}\mathcal{Z}_N\subseteq\Omega_I\ominus\mathcal{N},
	\end{equation}
	and we scale the inclusion from the right-hand side of \eqref{eq:eq_cond_I3} by $\eta$ to get that I3) holds for $\beta=\eta$. Similarly, by taking now the right-hand inclusion given in \eqref{eq:proj_incl} and by employing \eqref{eq:incl_a}, it is straightforward to obtain $\Omega_I = \begin{bmatrix}
		I_n & O
	\end{bmatrix}\widetilde{\Omega}_I\oplus\mathcal{N}\subseteq\varepsilon_s\,\mathcal{S}$, from which we retrieve I1).
	
	Moving on, we will construct a closed ball inside $\widetilde{\Omega}_I$ and, by using projection-based arguments, we will retrieve I2). Let $\psi:=\frac{1}{\left\|\widetilde{H}_I\right\|}$, along with $\overline{\mathcal{B}}_{\psi}:=\{\xi\in\mathbb{R}^{n+n_K}:\|\xi\|\leq\psi\}$. Since we have that $\left\|\widetilde{H}_I\xi\right\|\leq\left\|\widetilde{H}_I\right\|\left\|\xi\right\|\leq1$ for all $\xi\in\overline{\mathcal{B}}_\psi$, it is straightforward to notice that $\overline{\mathcal{B}}_\psi\subseteq\widetilde{\Omega}_I$ and, moreover, that $\begin{bmatrix}
		I_n & O
	\end{bmatrix}\overline{\mathcal{B}}_\psi\subseteq\begin{bmatrix}
		I_n & O
	\end{bmatrix}\widetilde{\Omega}_I$. Consider now the closed ball $\overline{\mathcal{B}}_{\psi}^{\,p}:=\{x\in\mathbb{R}^{n}:\|x\|\leq\psi\}$ in order to state that, for all $\xi\in\overline{\mathcal{B}}_{\psi}$, we have that $\|\begin{bmatrix}
		I_n & O
	\end{bmatrix}\xi\|\leq\|\xi\|\leq\psi$ from which the inclusion $\begin{bmatrix}
		I_n & O
	\end{bmatrix}\overline{\mathcal{B}}_\psi\subseteq\overline{\mathcal{B}}_{\psi}^{\,p}$ follows. In order to prove that the aforementioned inclusion is an equality, consider first an arbitrary $x\in\overline{\mathcal{B}}_{\psi}^{\,p}$. Then, there exists $\widetilde{\xi}:=\begin{bmatrix}
		x^\top & 0_{n_K}^\top
	\end{bmatrix}^\top$ such that $x=\begin{bmatrix}
		I_n & O
	\end{bmatrix}\widetilde{\xi}$ and that $\widetilde{\xi}\in\overline{\mathcal{B}}_{\psi}$, which proves that $\overline{\mathcal{B}}_{\psi}^{\,p}\subseteq\begin{bmatrix}
		I_n & O
	\end{bmatrix}\overline{\mathcal{B}}_\psi$. By employing the following chain of equalities and inclusions
	\begin{multline*}
		\overline{\mathcal{B}}_{\psi}^{\,p}=\begin{bmatrix}
			I_n & O
		\end{bmatrix}\overline{\mathcal{B}}_\psi\subseteq\begin{bmatrix}
			I_n & O
		\end{bmatrix}\widetilde{\Omega}_I\subseteq\\
		\subseteq\left(\begin{bmatrix}
			I_n & O
		\end{bmatrix}\widetilde{\Omega}_I\oplus\mathcal{N}\right)\ominus\mathcal{N}=\Omega_I\ominus\mathcal{N},
	\end{multline*}
	we finally retrieve I2) for $\alpha=\psi=\frac{1}{\left\|\widetilde{H}_I\right\|}$.\smallskip
	
	In order to prove point b), we begin by showing that $\lim\limits_{k\rightarrow\infty}\left\|A_{CL}^k\xi\right\|=0$ for all $\xi\in\mathbb{R}^{n+n_K}$ and then we prove (by contradiction) that $A_{CL}$ cannot have eigenvalues on or outside of the unit circle. To this end, note that for all $\xi\in\widetilde{\Omega}_I$, we have $\left\|\widetilde{H}_I\xi\right\|\leq\sqrt{n+n_K}$. Thus, recalling that $\det\big(\widetilde{H}_I \big)\neq0$ from \eqref{eq:OI_lifted} and defining $\theta:=\left\|\widetilde{H}_I^{-1}\right\|\sqrt{n+n_K}$, it follows that
	\begin{equation*}
		\|\xi\|=\left\|\widetilde{H}_I^{-1}\left(\widetilde{H}_I\xi\right)\right\|\leq\theta,\,\forall\,\xi\in\widetilde{\Omega}_I.
	\end{equation*}
	Using now the fact that $0_{2n}\in\mathcal{Z}$ (due to Assumption~\ref{asu:sets}) along with C1), to state that $A_{CL}\widetilde{\Omega}_I\subseteq\eta\,\widetilde{\Omega}_I$ which further implies that $A_{CL}^k\widetilde{\Omega}_I\subseteq\eta^k\widetilde{\Omega}_I$ for all $k\in\mathbb{N}$. We conclude, therefore, that $\left\|A_{CL}^k\xi\right\|\leq\eta^k\theta$ for all $\xi\in\widetilde{\Omega}_I$ and all $k\in\mathbb{N}$.
	
	In order to extend this property to any $\xi\in\mathbb{R}^{n+n_K}$ and retrieve the sought-after limit value, we recall the set $\overline{\mathcal{B}}_\psi$ introduced earlier and we point out that for any $\xi\in\mathbb{R}^{n+n_K}$, it is possible to define $\widehat{\xi}:=\dfrac{\psi}{\|\xi\|}\xi\in\overline{\mathcal{B}}_\psi\subseteq\widetilde{\Omega}_I$. By employing this new vector, it becomes possible to state that
	\begin{equation*}
		0\leq\left\|A_{CL}^k\xi\right\|=\frac{\|\xi\|}{\psi}\left\|A_{CL}^k\widehat{\xi}\right\|\leq\frac{\eta^k\theta\|\xi\|}{\psi}.
	\end{equation*}
	Recalling that $\eta\in(0,1)$ and employing the squeeze theorem, it follows that $\lim\limits_{k\rightarrow\infty}\left\|A_{CL}^k\xi\right\|=0$ for all $\xi\in\mathbb{R}^{n+n_K}$.
	
	In order to conclude the proof, let $A_{CL}=PJP^{-1}$ where $J$ is in Jordan canonical form and assume that there exists $i\in\mathbb{N}_{[1,n+n_K]}$ such that $Je_i=\lambda_i e_i$ with $|\lambda_i|\geq 1$, where $e_i$ denotes the $i^{\text{th}}$ column of $I_{n+n_K}$. Then, by defining the vector $p_i:=Pe_i$, we must have that
	\begin{equation*}
		\left\|A_{CL}^kp_i\right\|=\left\|PJ^ke_i\right\|=\left\|\lambda_i^kp_i\right\|\geq\|p_i\|,\,\forall\,k\in\mathbb{N}.
	\end{equation*}
	Note that $\|p_i\|>0$, due to $P$ being invertible, which means that there cannot exist any $k\in\mathbb{N}$ such that $\left\|A_{CL}^kp_i\right\|\leq\frac{\|p_i\|}{2}$. However, since this statement would contradict the fact that $\lim\limits_{k\rightarrow\infty}\left\|A_{CL}^kp_i\right\|=0$, then $A_{CL}$ cannot have any eigenvalue $\lambda_i$ with $|\lambda_i|\geq 1$ and is, therefore, a Schur matrix. \qed\smallskip
	
	\textbf{Proof of Theorem~\ref{thm:syn}}\smallskip
	
	We begin by showing the recursive feasibility of Algorithm~\ref{alg:syn_proc}. To this end, assume that $\mathcal{P}^0(Q^0)$ is feasible. Then, by employing the same notation for an optimizer of this problem as for the decision variables appearing in \eqref{eq:opt_a}-\eqref{eq:opt_s} for $k=0$, we are able to assert that
	\eqref{eq:opt_f} and \eqref{eq:opt_o} imply $X_{I1j}^k, X_{I2j}^k, X_{O1j}^k, X_{O2j}^k, X_{\mathcal{U}\ell}^k\succ O$, which also makes all of these matrices invertible. Therefore, it is possible to compute $Y_{I1j}^{k+1}$, $Y_{I2j}^{k+1}$, $Y_{O1j}^{k+1}$, $Y_{O2j}^{k+1}$ and $Y_{\mathcal
		U\ell}^{k+1}$ as in Algorithm~\ref{alg:syn_proc}. We now show (for $k=0$) that any optimizer of $\mathcal{P}^k\left(Q^k\right)$ is feasible for $\mathcal{P}^{k+1}\left(Q^{k+1}\right)$. Indeed, notice that the inequalities and the sparsity constraints given in \eqref{eq:opt_b}-\eqref{eq:opt_f} and \eqref{eq:opt_m}-\eqref{eq:opt_q} are trivially satisfied by an optimizer of $\mathcal{P}^k\left(Q^k\right)$, whereas \eqref{eq:opt_s} is always satisfied by an optimizer of the previous iteration (for either of its two branches). Thus, we need only check that \eqref{eq:opt_g}-\eqref{eq:opt_i} along with \eqref{eq:opt_k} and \eqref{eq:opt_p} are satisfied by any optimizer of $\mathcal{P}^k\left(Q^k\right)$, and we begin with~\eqref{eq:opt_g}.
	
	By recalling that $X_{I2j}^k\succ O$, for all $j\in\mathbb{N}_{[1,n+n_K]}$, and by plugging the aforementioned optimizer along with the explicit expression of $Y_{I2j}^{k+1}$ into \eqref{eq:opt_g}, we get the fact that\vspace{-1mm}
	\begin{equation}\label{eq:svi_a1}
		\begin{aligned}
			M_{I2j}^{k+1}:=&\scriptsize\begin{bmatrix}
				D_{I1j}^k & I_{n+n_K} \\ * &\hspace{-3mm} \left(\widetilde{H}_{I}^kY_{I2j}^{k+1}\right)^\top\hspace{-1.5mm}+\widetilde{H}_{I}^kY_{I2j}^{k+1}- \left(Y_{I2j}^{k+1}\right)^\top \hspace{-1.5mm}X_{I2j}^kY_{I2j}^{k+1}
			\end{bmatrix}\\
			=&\footnotesize\begin{bmatrix}
				D_{I1j}^k & I_{n+n_K} \\ * &\hspace{-2.5mm} \widetilde{H}_I^k\hspace{-0.5mm} \left(X_{I2j}^k\right)^{-1}\hspace{-1mm}\left(\widetilde{H}_I^k\right)^\top
			\end{bmatrix}\\
			=&\scriptsize\begin{bmatrix}
				D_{I1j}^k & I_{n+n_K} \\ * &\hspace{-3mm} \left(\widetilde{H}_{I}^kY_{I2j}^{k}\right)^\top\hspace{-1.5mm}+\widetilde{H}_{I}^kY_{I2j}^{k}- \left(Y_{I2j}^{k}\right)^\top \hspace{-1.5mm}X_{I2j}^kY_{I2j}^{k}+Z_{I2j}^k
			\end{bmatrix}\hspace{-0.5mm},\\\phantom{}
		\end{aligned}
	\end{equation}
	for all $j\in\mathbb{N}_{[1,n+n_K]}$, where the last equality follows from the \emph{slack variable identity} (see (41) in \cite{svi2}), having defined\vspace{-1mm}
	\begin{equation}\label{eq:svi_a2}
		\footnotesize\begin{array}{l}
			Z_{I2j}^k:= \left( \hspace{-1mm}\left(\widetilde{H}_{I}^k\right)^\top  \hspace{-1.5mm}- X_{I2j}^kY_{I2j}^{k}\right)^\top \hspace{-1.5mm} \left(X_{I2j}^k\right)^{-1} \hspace{-1mm}\left( \hspace{-1mm}\left(\widetilde{H}_{I}^k\right)^\top  \hspace{-1.5mm}- X_{I2j}^kY_{I2j}^{k}\right).
		\end{array}\normalsize
	\end{equation}
	
	\begin{figure*}
		\begin{equation}\label{eq:C1_S_proc}\tag{21}
			\begin{aligned}
				&2e_j^\top\widetilde{H}_{I}(A_{CL}\xi+B_{LC}z_s)-2\eta=-\left(1_{n+n_K}-\widetilde{H}_I\xi\right)^\top D_{I1j}\left(1_{n+n_K}+\widetilde{H}_I\xi\right) -\\ &\qquad\qquad\qquad\qquad\qquad\qquad-(1_{n+n_K}-H_{\mathcal{Z}}z_s)^\top D_{I2j}(1_{n+n_K}+H_{\mathcal{Z}}z_s) - \begin{bmatrix}
					\xi^\top & z_s^\top & 1
				\end{bmatrix}L_{C1j}\begin{bmatrix}
					\xi^\top & z_s^\top & 1
				\end{bmatrix}^\top.
			\end{aligned}
		\end{equation}
		\hrulefill
	\end{figure*}
	
	Now, since the employed optimizer satisfies all of the constraints which go into $\mathcal{P}^k\left(Q^k\right)$ (one of them being \eqref{eq:opt_g}), it follows that $M_{gj}^{k+1}-\mathrm{diag}\left(O, Z_{I2j}^k\right)\succ O$ and, recalling $X_{I2j}^k\succ O$, also that $M_{gj}^{k+1}\succ O$, for all $j\in\mathbb{N}_{[1,n+n_K]}$. Therefore, we are able to conclude that any optimizer of $\mathcal{P}^{k}\left(Q^{k}\right)$ satisfies constraint \eqref{eq:opt_g} for $\mathcal{P}^{k+1}\left(Q^{k+1}\right)$. The fact that the same statement holds for \eqref{eq:opt_h}, \eqref{eq:opt_i}, \eqref{eq:opt_k}, and \eqref{eq:opt_p} can be shown by applying, \emph{mutatis mutandis}, the same arguments as those employed in proving the statement concerning \eqref{eq:opt_g}.
	
	At this point, we have shown that all of the constraints which go into $\mathcal{P}^{k+1}\left(Q^{k+1}\right)$ are feasible for any optimizer of $\mathcal{P}^{k}\left(Q^{k}\right)$, where $k=0$. Thus, provided that the problem tackled in the initialization step of Algorithm~\ref{alg:syn_proc} is feasible, we can employ any of its optimizers to show that the first step of the algorithm's iterative phase involves solving a feasible problem. In order to obtain recursive feasibility, notice that all of the arguments employed to prove the transmission of feasibility from the initialization to the iteration's first step can also be applied for any iteration $k\in\mathbb{N}$. Consequently, a straightforward induction-based proof yields the recursive feasibility of Algorithm~\ref{alg:syn_proc} along with the fact that, for all optimizers produced by the latter's iterative phase, we have $X_{I1j}^k, X_{I2j}^k, X_{O1j}^k, X_{O2j}^k, X_{\mathcal{U}\ell}^k\succ O$ for all $k\in\mathbb{N}$ (recall the arguments made in the very beginning of the result's proof).

	We now conclude the proof of point a), by showing the guaranteed convergence of Algorithm~\ref{alg:syn_proc}. To do so, we employ the previously proven fact that an optimizer of $\mathcal{P}^k\left(Q^k\right)$, for an arbitrary iteration $k\in\mathbb{N}$, is feasible for $\mathcal{P}^{k+1}\left(Q^{k+1}\right)$. Then, by denoting via $J^k_{\mathrm{opt}}$ the optimal value obtained when solving $\mathcal{P}^k\left(Q^k\right)$, for all $k\in\mathbb{N}$, it is straightforward to notice that $J^k_{\mathrm{opt}}\geq J^{k+1}_{\mathrm{opt}}$ for all $k\in\mathbb{N}_{>0}$. Indeed, since any two consecutive problems share the same expression for the cost function (when $k\in\mathbb{N}_{>0}$), no optimizer of $\mathcal{P}^{k+1}\left(Q^{k+1}\right)$ can yield a greater cost value than an optimizer of $\mathcal{P}^{k}\left(Q^{k}\right)$, given that the latter's constituent matrices represents a feasible tuple for $\mathcal{P}^{k+1}\left(Q^{k+1}\right)$. Then, the sequence formed by the values of $J^k_{\mathrm{opt}}$ is monotonically non-increasing for all $k\in\mathbb{N}_{>0}$, and it is bounded from below by $0$, due to the cost function being the squared Frobenius norm of $G^k$. Therefore, we can employ the monotone convergence theorem to state that the sequence given by $J^k_{\mathrm{opt}}$ converges to some semi-positive value, which means that the stopping condition of Algorithm~\ref{alg:syn_proc} must be satisfied for some $k^\star\in\mathbb{N}_{>0}$.
	
	To prove point b), we first set $k=0$ and we then recall the invertibility of $Y_{I2j}^k$ along with the positive definiteness of $X_{I2j}^k$, which was previously shown in the proof of point~a) for all $j\in\mathbb{N}_{[1,n+n_K]}$. Then, \eqref{eq:opt_g} implies the fact that\vspace{-1mm}
	\begin{equation*}
		\left(\widetilde{H}_{I}^kY_{I2j}^k\right)^\top\hspace{-1.5mm}+\widetilde{H}_{I}^kY_{I2j}^k- \left(Y_{I2j}^k\right)^\top X_{I2j}^kY_{I2j}^k\succ O,\vspace{-1mm}
	\end{equation*}
	and, moreover, that\vspace{-1mm}
	\begin{equation*}
		\left(\widetilde{H}_{I}^kY_{I2j}^k\right)^\top\hspace{-1.5mm}+\widetilde{H}_{I}^kY_{I2j}^k \succ \left(Y_{I2j}^k\right)^\top X_{I2j}^kY_{I2j}^k\succ O.
	\end{equation*}
	Therefore, $\widetilde{H}_{I}^k$ must be invertible since, otherwise, any vector from its left nullspace could be used to refute the positive definiteness of $\left(\widetilde{H}_{I}^kY_{I2j}^k\right)^\top\hspace{-1.5mm}+\widetilde{H}_{I}^kY_{I2j}^k$. In addition to this, we also get that $Y_{I2j}^{k+1}=\left(X_{I2j}^{k}\right)^{-1}\left(\widetilde{H}_{I}^{k}\right)^\top$ is invertible which, coupled with the fact that $X_{I2j}^{k+1}\succ O$ (recalling the proof of point a), stated above), replicates the same two properties which were employed to construct the arguments made for the case $k=0$. Since the aforementioned arguments can be reused for any desired $k\in\mathbb{N}$, it becomes straightforward to prove by induction the fact that $Y_{I2j}^{k}$ and $\widetilde{H}_{I}^k$ are invertible, for all $j\in\mathbb{N}_{[1,n+n_K]}$ and all $k\in\mathbb{N}$. By exploiting, now, the invertibility of $Y_{I1j}^0$, $Y_{O1j}^0$ and $Y_{O2j}^0$ along with the positive definiteness of all $X_{I1j}^k$, $X_{O1j}^k$ and $X_{O2j}^k$, it is possible to employ \eqref{eq:opt_h}, \eqref{eq:opt_i} and \eqref{eq:opt_k} in an analogous manner to prove that $\widetilde{H}_{Ii}^k$, $H_O^k$ and $H_{Oi}^k$ \big(along with $Y_{I1j}^k$, $Y_{O1j}^k$ and $Y_{O2j}^k$\big) are all invertible, for each $k\in\mathbb{N}$ \big(and all $j\in\mathbb{N}_{[1,n+n_K]}$\big).
	
	Finally, we now address point c) of our result and, in order to do so, we first show how the set-theoretical conditions given in points O1)-O4) and C1)-C2) map onto the individual constraints of the optimization problem given in \eqref{eq:opt_a}-\eqref{eq:opt_s}.
	
	We begin with C1), which is satisfied whenever we have \begin{subequations}
		\begin{align}\label{eq:C1_ineq_a}
			&\phantom{-}\ 2e_j^\top\widetilde{H}_{I}(A_{CL}\xi+B_{LC}z_s)-2\eta\leq0,\\
			&-2e_j^\top\widetilde{H}_{I}(A_{CL}\xi+B_{LC}z_s)-2\eta\leq0,\label{eq:C1_ineq_b}
		\end{align}
	\end{subequations}
	for all $\xi\in\widetilde{\Omega}_I$, all $z_s\in\mathcal{Z}_s$ and all $j\in\mathbb{N}_{[1,n+n_K]}$. By exploiting the symmetry of both $\widetilde{\Omega}_I$ and $\mathcal{Z}_s$, note that satisfying any of the two inequalities in \eqref{eq:C1_ineq_a}-\eqref{eq:C1_ineq_b} guarantees that the other holds, and, thus, we proceed by considering only \eqref{eq:C1_ineq_a}. By doing so, we then rewrite the left-hand term from \eqref{eq:C1_ineq_a} as in \eqref{eq:C1_S_proc}, located at the top of this page, where $D_{I1j}$ and $D_{I2j}$ are diagonal and positive semidefinite matrices, and where we have defined\stepcounter{equation}
	\begin{equation}\label{eq:L_C1j_def}
		\hspace{-1mm}L_{C1j}:=\begin{bmatrix}
			\widetilde{H}_I^\top D_{I1j}\widetilde{H}_I & O & -A_{CL}^\top\widetilde{H}_I^\top e_j\\
			* & {H}_{\mathcal{Z}}^\top D_{I2j}{H}_{\mathcal{Z}} & -B_{CL}^\top\widetilde{H}_I^\top e_j\\
			* & * & r_{Ij}
		\end{bmatrix},\normalsize\hspace{-1mm}
	\end{equation}
	along with $
	r_{Ij}:=2\eta-1_{n+n_K}^\top D_{I1j}1_{n+n_K}^{}-1_{n_{\mathcal{Z}}}^\top D_{I2j}1_{n_{\mathcal{Z}}}^{},
	$
	for all $j\in\mathbb{N}_{[1,n+n_K]}$. Recalling, now, the expressions of the sets from \eqref{eq:set_H_reps}, the fact that $D_{I1j}$ and $D_{I2j}$ are diagonal and positive semidefinite, and that $\xi\in\widetilde{\Omega}_I$ along with $z_s\in\mathcal{Z}_s$, it is straightforward to notice the the term located on the left-hand side of \eqref{eq:C1_S_proc} is negative if $L_{C1j}\succ O$ or, equivalently (by Schur complement-based arguments), if $r_{Ij}>0$ and also
	\begin{multline*}
		\begin{bmatrix}
			\widetilde{H}_I^\top D_{I1j}\widetilde{H}_I & O\\
			O & {H}_{\mathcal{Z}}^\top D_{I2j}{H}_{\mathcal{Z}}
		\end{bmatrix}-\begin{bmatrix}
			A_{CL}^\top\\B_{CL}^\top
		\end{bmatrix}\widetilde{H}_I^\top e_j\frac{1}{r_{Ij}}e_j^\top\cdot\\
		\cdot \widetilde{H}_I\begin{bmatrix}
			A_{CL} & B_{CL}
		\end{bmatrix}\succ O,\,\forall\,j\in\mathbb{N}_{[1,n+n_K]}.
	\end{multline*}

	By applying Theorem III.1 in \cite{svi1}, the latter two conditions are equivalent to the existence of $X_{I1j}\succ O$ which satisfy
	\begin{equation}\label{eq:first_form}
		\left\{\begin{aligned}
			\begin{bmatrix}
				\widetilde{H}_I^\top D_{I1j}\widetilde{H}_I & O & A_{CL}^\top\\\vspace{-3mm}\\
				* & {H}_{\mathcal{Z}}^\top D_{I2j}{H}_{\mathcal{Z}} & B_{CL}^\top\\
				* & * & X_{I1j}
			\end{bmatrix}\succ O,\\
			\begin{bmatrix}
				X_{I1j}^{-1} & \widetilde{H}_I^\top e_j\\
				* & r_{Ij}
			\end{bmatrix}\succ O,\normalsize
		\end{aligned}\right.
	\end{equation}
	for all $j\in\mathbb{N}_{[1,n+n_K]}$. Moreover, by exploiting the invertibility of $\widetilde{H}_I$, recalling \eqref{eq:OI_lifted}, and by effecting a pair of nonsingular congruence transformations onto the inequalities in \eqref{eq:first_form}, we get that the latter are equivalent to
	\begin{subequations}
		\begin{align}\label{eq:second_form_a}
			\begin{bmatrix}
				D_{I1j} & O & \left(\widetilde{H}_I^\top\right)^{-1}A_{CL}^\top\\\vspace{-3mm}\\
				* & {H}_{\mathcal{Z}}^\top D_{I2j}{H}_{\mathcal{Z}} & B_{CL}^\top\\
				* & * & X_{I1j}
			\end{bmatrix}&\succ O,\\
			\begin{bmatrix}
				\left(\widetilde{H}_I^\top\right)^{-1}X_{I1j}^{-1}\,\widetilde{H}_I^{-1} &  e_j\\
				* & r_{Ij}
			\end{bmatrix}&\succ O.\label{eq:second_form_b}
		\end{align}
	\end{subequations}
	
	Denoting now $\widetilde{H}_{Ii}:=\widetilde{H}_I^{-1}$ and applying Theorem III.1 from \cite{svi1} to \eqref{eq:second_form_a}, we get that \eqref{eq:second_form_a}-\eqref{eq:second_form_b} are equivalent to the existence of $X_{I2j}\succ O$ which satisfy
	\begin{subequations}
		\begin{align}\label{eq:third_form_a}
			\begin{bmatrix}
				X_{I2j} & O & A_{CL}^\top\\\vspace{-3mm}\\
				* & {H}_{\mathcal{Z}}^\top D_{I2j}{H}_{\mathcal{Z}} & B_{CL}^\top\\
				* & * & X_{I1j}
			\end{bmatrix}&\succ O,\\
			\begin{bmatrix}
				D_{I1j} & \widetilde{H}_{Ii}^\top\\\vspace{-3mm}\\
				* & X_{I2j}^{-1}
			\end{bmatrix}&\succ O,\normalsize\label{eq:third_form_b}\\
			\begin{bmatrix}
				\widetilde{H}_{Ii}^\top\,X_{I1j}^{-1}\,\widetilde{H}_{Ii}^{} &  e_j\\
				* & r_{Ij}
			\end{bmatrix}&\succ O,\label{eq:third_form_c}
		\end{align}
	\end{subequations}
	for all $j\in\mathbb{N}_{[1,n+n_K]}$. Finally, in order to map the inequalities \eqref{eq:third_form_a}-\eqref{eq:third_form_c} onto the constraints from \eqref{eq:opt_a}-\eqref{eq:opt_s}, we effect a nonsingular congruence transformation on \eqref{eq:third_form_b} to obtain an equivalent condition for it, namely
	\begin{equation}\label{eq:third_form_equiv}
		\begin{bmatrix}
			D_{I1j} & I_{n+n_K}\\\vspace{-3mm}\\
			* & \widetilde{H}_IX_{I2j}^{-1}\widetilde{H}_I^\top
		\end{bmatrix}\succ O,
	\end{equation}
	and we employ the slack variable identity in \eqref{eq:third_form_equiv} and \eqref{eq:third_form_c}, just as in \eqref{eq:svi_a1}-\eqref{eq:svi_a2}, to get that \eqref{eq:third_form_a}-\eqref{eq:third_form_c} are satisfied if there exist $Y_{I1j}$ and $Y_{I2j}$ so that
	\begin{subequations}
		\begin{align}\label{eq:fourth_form_a}
			\begin{bmatrix}
				X_{I2j} & O & A_{CL}^\top\\\vspace{-3mm}\\
				* & {H}_{\mathcal{Z}}^\top D_{I2j}{H}_{\mathcal{Z}} & B_{CL}^\top\\
				* & * & X_{I1j}
			\end{bmatrix}&\succ O,\\
			\begin{bmatrix}
				D_{I1j} & I_{n+n_K}\\\vspace{-3mm}\\
				* & Y_{I2j}^{\top}\widetilde{H}_{I}^{\top}+\widetilde{H}_{I}^{}Y_{I2j}^{}-Y_{I2j}^{\top}X_{I2j}^{}Y_{I2j}^{}
			\end{bmatrix}&\succ O,\label{eq:fourth_form_b}\hspace{-1mm}\vspace{2mm}\\
			\begin{bmatrix}
				Y_{I1j}^{\top}\widetilde{H}_{Ii}^{}+\widetilde{H}_{Ii}^{\top}Y_{I1j}^{}-Y_{I1j}^{\top}X_{I1j}^{}Y_{I1j}^{} &  e_j\\
				* & r_{Ij}
			\end{bmatrix}&\succ O,\label{eq:fourth_form_c}\\
			r_{Ij}:=2\eta-1_{n+n_K}^\top D_{I1j}1_{n+n_K}^{}-1_{n_{\mathcal{Z}}}^\top D_{I2j}&1_{n_{\mathcal{Z}}}^{}\label{eq:fourth_form_d},
		\end{align}
	\end{subequations}
	for all $j\in\mathbb{N}_{[1,n+n_K]}$. It is now clear that if the cost function is $0$ at convergence, then $\widetilde{H}_{Ii}^{k^\star}=\left(\widetilde{H}_{I}^{k^\star}\right)^{-1}$ along with conditions
	
	\newpage\noindent
	\eqref{eq:opt_f}, \eqref{eq:opt_g}, \eqref{eq:opt_i} and \eqref{eq:opt_j} guarantee that \eqref{eq:fourth_form_a}-\eqref{eq:fourth_form_d} hold and, so, the matrices defined in \eqref{eq:L_C1j_def} are positive definite, for all $j\in\mathbb{N}_{[1,n+n_K]}$. Finally, \eqref{eq:opt_b} ensures that $D_{I1j}^{k^\star}$ and $D_{I2j}^{k^\star}$ are diagonal and positive semidefinite, for each $j\in\mathbb{N}_{[1,n+n_K]}$, satisfying the conditions in \eqref{eq:C1_ineq_a}-\eqref{eq:C1_ineq_b} and making C1) hold.
	
	By applying now the same arguments as those used to show that C1) holds, it is straightforward to prove that 
	\eqref{eq:opt_c} and \eqref{eq:opt_m} along with \eqref{eq:opt_d} and \eqref{eq:opt_n} are employed to ensure that $\widetilde{\Omega}_I\subseteq\left((\varepsilon_s-\varepsilon_m-\varepsilon_p)\,\mathcal{S}\right)\hspace{-0.5mm}\times\hspace{-0.5mm}\mathbb{R}^{n_K}$ and that $\Omega_O\subseteq\mathcal{X}\times\mathbb{R}^{n_K}$, respectively, which guarantee that C2) along with O3) are satisfied. Similarly, it follows by tedious algebraic manipulations that \eqref{eq:opt_b},\eqref{eq:opt_f}, \eqref{eq:opt_h}, \eqref{eq:opt_k} and \eqref{eq:opt_l} are used in the optimization problem to ensure that O2) holds, while \eqref{eq:opt_e}, \eqref{eq:opt_o} and \eqref{eq:opt_p} are tasked with ensuring that O4) is satisfied. Finally, and perhaps most straightforwardly, the conditions stated in \eqref{eq:opt_q} ensure that all the vertices of the polyhedron $\mathcal{S}_c\times\{0_{n_K}\}$ belong to the set $\Omega_I$. Since the latter is convex and the points which make up the former are convex combinations of its vertices, the inclusion which makes up O1) is automatically satisfied. \qed


	\bibliographystyle{IEEEtran}
	\bibliography{manuscript}\end{document}